\def\BibTeX{{\rm B\kern-.05em{\sc i\kern-.025em b}\kern-.08em
    T\kern-.1667em\lower.7ex\hbox{E}\kern-.125emX}}
\newtheoremstyle{mystyle}%
{}{}%
{\itshape}%
{0pt}%
{\bfseries}
{.}%
{.5em}%
{\indent\thmname{#1}\thmnumber{ #2}\thmnote{ (#3)}}
\declaretheorem[style=mystyle]{lemma}
\declaretheorem[style=mystyle]{example}
\declaretheorem[style=mystyle]{remark}
\declaretheorem[style=mystyle]{definition}
\newcommand{\ap}{\mathit{AP}}
\newcommand{\pathVars}{\mathcal{V}}
\newcommand{\directions}{\mathbb{D}}
\newcommand{\ldot}{\mathpunct{.}}
\newcommand{\quant}{\mathds{Q}}
\newcommand{\nat}{\mathbb{N}}
\newcommand{\calG}{\mathcal{G}}
\newcommand{\calA}{\mathcal{A}}
\newcommand{\calL}{\mathcal{L}}
\newcommand{\calK}{\mathcal{K}}
\providecommand{\ltlN}{\operatorname{%
		\protect\tikz[baseline]{
			\draw[line width=.12ex]
			(0,.6ex) circle (.8ex);
}}}{}
\providecommand{\ltlF}{\operatorname{%
		\protect\tikz[baseline]{
			\draw[line width=.12ex,line join=round]
			(0ex,.6ex) -- (.95ex,1.55ex) -- (1.9ex,.6ex) -- (.95ex,-.35ex) -- cycle;
}}}{}
\providecommand{\ltlG}{\operatorname{%
		\protect\tikz[baseline]{
			\draw[line width=.12ex,line join=round]
			(0ex,-.2ex) -- (0ex,1.3ex) -- (1.5ex,1.3ex) -- (1.5ex,.-.2ex) -- cycle;
}}}{}
\DeclareMathOperator{\ltlU}{\mathcal{U}}
\newcommand{\agents}{\mathbb{P}}
\newcommand{\paths}{\mathit{Paths}}
\newcommand{\traces}{\mathit{Traces}}
\newcommand{\game}[2]{\mathcal{G}_{#1, #2}}
\newcommand{\gamee}[2]{\mathcal{G}^{\forall\exists}_{#1, #2}}
\newcommand{\gamenode}[1]{\langle #1 \rangle}
\newcommand{\winss}[2]{\mathit{wins}(#1, #2)}
\newcommand{\veri}{\mathfrak{V}}
\newcommand{\refu}{\mathfrak{R}}
\setlist[itemize]{leftmargin=*}
\newif\iffullversion
\newcommand{\ifFull}[2]{\iffullversion#1\else#2\fi}
\begin{document}

\title{On Hyperproperty Verification, Quantifier Alternations, and Games under Partial Information}
\author{
	\IEEEauthorblockN{Raven Beutner \orcid{0000-0001-6234-5651}}
\IEEEauthorblockA{\textit{CISPA Helmholtz Center for} \\ \textit{Information Security} \\ \textit{Germany}}
\and
\IEEEauthorblockN{Bernd Finkbeiner \orcid{0000-0002-4280-8441}}
\IEEEauthorblockA{\textit{CISPA Helmholtz Center for} \\ \textit{Information Security} \\ \textit{Germany}}
}

\maketitle

\begin{abstract}
Hyperproperties generalize traditional trace properties by relating multiple execution traces rather than reasoning about individual runs in isolation. 
They provide a unified way to express important requirements such as information flow and robustness properties. 
Temporal logics like HyperLTL capture these properties by explicitly quantifying over executions of a system. 
However, many practically relevant hyperproperties involve \emph{quantifier alternations}, a feature that poses substantial challenges for automated verification.
Complete verification methods require a system complementation for each quantifier alternation, making it infeasible in practice.
A cheaper (but incomplete) method interprets the verification of a HyperLTL formula as a two-player game between universal and existential quantifiers. 
The game-based approach is significantly cheaper, facilitates interactive proofs, and allows for easy-to-check certificates of satisfaction. 
It is, however, limited to $\forall^*\exists^*$ properties, leaving important properties out of reach. 
In this paper, we show that we can use games to verify hyperproperties with arbitrary quantifier alternations by utilizing multiplayer games under \emph{partial information}.
While games under partial information are, in general, undecidable, we show that our game is played under hierarchical information and thus falls in a decidable class of games.
We discuss the completeness of the game and study prophecy variables in the setting of partial information.
\end{abstract}

\section{Introduction}\label{sec:intro}

In 2008, Clarkson and Schneider \cite{ClarksonS08} coined the term \emph{hyperproperties} for the rich class of system requirements that relate multiple executions. 
In contrast to \emph{trace properties} -- i.e., properties over individual executions, expressed, e.g., in linear-time temporal logics (LTL) \cite{Pnueli77} -- hyperproperties can express important properties related to information flow, knowledge, and robustness.
As an example, consider a system with secret input $h$, public input $l$, and public output $o$, and assume we want to express that the public behavior does not leak any information about the secret input. 
We cannot express such an information-flow requirement as a trace property in, e.g., LTL; we need to compare \emph{multiple} executions to see if (and how) the secret input impacts the output.
Instead, we can express it as a hyperproperty in HyperLTL \cite{ClarksonFKMRS14}, an extension of LTL with explicit quantification over execution traces.
For example,
\begin{align}\label{eq:od}
	\forall \pi_1. \forall \pi_2. \ltlG (l_{\pi_1} \leftrightarrow l_{\pi_2}) \to \ltlG (o_{\pi_1} \leftrightarrow o_{\pi_2}). \tag{\textsf{OD}}
\end{align}
requires that any pair of executions $\pi_1, \pi_2$ with identical public input also has the same output, i.e., the output is fully determined by the public input and thus cannot possibly leak the secret input (cf.~\emph{observational determinism} \cite{ZdancewicM03}).

While originating in the study of information flow, hyperproperties have since been established as a much more general framework that captures properties from many different areas, including, e.g., knowledge properties in multi-agent systems (MAS) \cite{HoekW02,BeutnerFFM23}.
As an example, consider some MAS with agents $1, \ldots, n$, and some LTL property $\psi$, and assume that we want to verify that there exists at least one execution of the MAS such that agent $i$ knows that $\psi$ holds (e.g., some adversary knowing some secret).
Formally, \emph{knowing} that $\psi$ holds on some execution trace $\pi_1$ means that $\psi$ must hold on all traces $\pi_2$ that are indistinguishable from $\pi_1$ for agent $i$ (cf.~\cite{halpern1986reasoning}), which \emph{is} a hyperproperty. 
We can easily express this property in HyperLTL, as follows
\begin{align}\label{eq:k1}
	\exists \pi_1. \forall \pi_2\ldot \pi_1 \equiv_i \pi_2 \to \psi[\pi_2],\tag{\textsf{K}$_1$}
\end{align}
where we write $\psi[\pi_2]$ to indicate that $\psi$ holds on trace $\pi_2$, and $\pi_1\equiv_i \pi_2$ denotes that executions $\pi_1$ and $\pi_2$ appear indistinguishable under agent $i$'s observations.
Using the flexibility of quantification, we can also express nested knowledge properties. 
For example, we can express that, on some execution, agent $i$ knows that agent $j$ does \emph{not} know whether $\psi$ holds:
\begin{align}\label{eq:k2}
	\begin{split}
		\exists \pi_1. \forall \pi_2. &\exists \pi_3. \exists \pi_4\ldot \pi_1 \equiv_i \pi_2 \to \\
		&\quad\big(   \pi_2 \equiv_j \pi_3 \land \pi_2 \equiv_j \pi_4 \land \psi[\pi_3] \land \neg \psi[\pi_4]\big),
	\end{split}\tag{\textsf{K}$_2$}
\end{align}
i.e., for every trace $\pi_2$ that agent $i$ cannot distinguish from $\pi_1$, there exist two traces $\pi_3, \pi_4$ that agent $j$ cannot distinguish from $\pi_2$, one of which satisfies $\psi$ and one violates $\psi$. 

\paragraph*{Verification of HyperLTL}

In this paper, we study the model-checking problem for HyperLTL (or, more generally, of hyperproperties that can be expressed using HyperLTL-style quantification over system paths or traces \cite{Rabe16,GutsfeldMO20,CoenenFHH19}).
Unsurprisingly, the quantifier prefix of the HyperLTL formula directly impacts the complexity of this verification problem. 
For alternation-free formulas (e.g., \ref{eq:od}), verification can be reduced to the verification of an LTL property on the self-composition of the system \cite{barthe2011secure,FinkbeinerRS15}, which is very efficient. 
Verification gets much more challenging when the formula includes quantifier alternations as used in \ref{eq:k1}, \ref{eq:k2}, and many other HyperLTL formulas studied in the literature \cite{WangNP20,BiewerDFGHHM22,anand2024verification}.
Complete approaches require one system complementation for each alternation in the formula \cite{ClarksonFKMRS14,FinkbeinerRS15}, making it infeasible in practice. 

\paragraph*{Game-based Verification}

A cheaper (but incomplete) verification method for $\forall^*\exists^*$ formulas (i.e., formulas where an arbitrary number of universal quantifiers is followed by an arbitrary number of existential quantifiers) is based on a strategy-based interpretation of existential quantification \cite{CoenenFST19,BeutnerF22}.
The key idea is to interpret the verification of a HyperLTL formula of the form $\forall \pi_1. \exists \pi_2. \psi$ (where $\psi$ is the LTL body) as a \emph{game} between two players. 
A refuter controls the universally quantified trace by moving through a copy of the underlying system, thereby constructing a concrete trace for  $\pi_1$. 
The verifier reacts to the moves by the refuter and moves through a separate copy of the system, thereby producing a concrete trace for $\pi_2$.
The goal of the verifier is to ensure that $\pi_1$ and $\pi_2$, together, satisfy $\psi$. 
We can think of the verifier's strategy as providing a step-wise construction of a concrete witness trace for $\pi_2$ (akin to a Skolem function).
This game-based approach is sound (i.e., if the verifier wins, the hyperproperty is satisfied by the system) and computationally cheaper than complementation-based approaches.
Moreover, the game-based approach also allows for interactive proofs and witnesses of satisfaction.
For example, we can use the game-based framework to let the user construct a strategy \emph{interactively} \cite{CorrensonF25}, allowing verification even in situations where automated techniques do not scale.
Likewise, we can use a winning strategy for the verifier as an (easy-to-check) certificate that the property is satisfied \cite{BeutnerFG24}. 

\paragraph*{Unsoundness Beyond $\forall^*\exists^*$}

The game-based verification approach of Coenen et al.~\cite{CoenenFST19} and its descendants have proven themselves in many situations (cf.~\Cref{sec:relatedWork}).
However, since its inception, the approach has been limited to $\forall^*\exists^*$ properties.
Intuitively, as soon as we consider properties beyond $\forall^*\exists^*$, the step-wise selection of the traces leads to unsoundness, i.e., cases where a winning strategy exists even though the property is violated; we give a concrete instance in \Cref{ex:unsound}. 
Consequently, for properties outside the $\forall^*\exists^*$ fragment, no effective verification approximation exists (cf.~\Cref{sec:relatedWork}), nor does there exist any approach that allows interactive proofs or satisfaction certificates. 

\paragraph*{Partial Information} 

In this paper, we present a novel game-based method that allows us to soundly verify \emph{arbitrary} quantifier structures.
Our key contribution is the observation that we need to reason about \emph{partial information}.
In our game-based encoding, we consider multiple players, each of whom controls a unique trace in the HyperLTL formula.
We then carefully design an observation model for each player to ensure soundness.
Intuitively, our observations ensure that each player controlling some trace $\pi$ can only observe the state sequence from traces that are quantified \emph{before} $\pi$. 
We thus obtain a multiplayer game played under partial information that, if won by a certain group of players, ensures that the formula holds on the given system (\Cref{sec:game-ii}).

\paragraph*{Hierarchical Information}
In general, multiplayer games under incomplete information are undecidable. 
We show that our verification game falls in a well-known class of games that can be solved effectively.
Namely, games where the information of the players is \emph{hierarchical} \cite{BerthonMM17,berwanger2018hierarchical,MaubertM18,BerthonMMRV21,BerwangerCWDH10,ChatterjeeD10,GastinSZ09}. 

\paragraph*{Completeness and Prophecy Variables}

Similar to the $\forall^*\exists^*$ game \cite{CoenenFST19,BeutnerF22}, our game-based approach using partial information is incomplete: In some cases, the property holds, but no winning strategy exists. 
While incomplete for $\forall^*\exists^*$ properties, we show that our approach is complete for $\exists^*\forall^*$ properties (\Cref{sec:ea}).
This gives rise to a sound-and-complete verification method for all properties with at most one quantifier alternation (via complementation), at the cost of introducing partial information.
More generally, we study the use of prophecy variables \cite{AbadiL88,BeutnerF22} in our verification game, allowing a player to peek at the future temporal behavior of other players (\Cref{sec:comp}).

\paragraph*{Applications}

The core contribution that partial information enables game-based verification of arbitrary hyperproperties creates a plethora of new possibilities for hyperproperty verification:
Our game-based view \textbf{(1)} leverages techniques for solving partial information games (cf.~\Cref{sec:relatedWork}) for automated verification; \textbf{(2)} facilitates interactive proof of hyperproperties beyond $\forall^*\exists^*$ by letting the user construct strategies; \textbf{(3)} enables strategies as easy-to-check certificates for satisfaction; and \textbf{(4)} supports prophecies to soundly strengthen the approach, both in automated and interactive verification.

\paragraph*{Full Version}
Proofs of all results can be found in the \ifFull{appendix}{full version \cite{fullVersion}}.

\section{Preliminaries}\label{sec:prelim}

\paragraph*{Kripke Structures}

As the basic system model, we use finite-state Kripke structures.
We assume that $\ap$ is a fixed set of \emph{atomic propositions} (AP). 
A Kripke structure (KS) is a tuple $\calK = (S, s_\mathit{init}, \directions, \kappa, \allowbreak \ell)$, where $S$ is a finite set of states, $s_\mathit{init} \not\in S$ is a dedicated initial state (\emph{not} part of $S$), $\directions$ is a finite set of directions, $\kappa : (S \uplus \{s_\mathit{init}\}) \times \directions \to S$ is the transition function, and $\ell : (S \uplus \{s_\mathit{init}\})  \to 2^\ap$ labels each state with an evaluation of the APs.\footnote{We use slightly unconventional notation in two places:
	Firstly, we assume a dedicated initial state $s_\mathit{init}$, simplifying the addition of prophecies.
	Secondly, we use \emph{directions} to uniquely identify successor states, simplifying our game construction. 
	Traditionally, transition functions in Kripke structures are functions $S \to 2^S \setminus \{\emptyset\}$ that map each state to a non-empty set of successor states. 
	We can easily transform such a transition function into a directed function $S \times \directions \to S$ by using sufficiently many directions.}
A path in $\calK$ is an infinite sequence $\tau \in (S \uplus \{s_\mathit{init}\})^\omega$ such that $\tau(0) = s_\mathit{init}$, and for every $i \in \nat$, there exists some $d \in \directions$ such that $\tau(i+1) = \kappa(\tau(i), d)$.
We define $\paths(\calK)$ as the set of all paths in $\calK$.
Each path $\tau$ denotes an associated trace $\ell(\tau) := \ell(\tau(0))\ell(\tau(1)) \ell(\tau(2))\cdots \in (2^\ap)^\omega$ defined by applying $\ell$ pointwise.
We define $\traces(\calK) := \{\ell(\tau) \mid \tau \in \paths(\calK)\} \subseteq (2^\ap)^\omega$ as the set of all traces generated by $\calK$.

\paragraph*{Linear-Time Temporal Logic}

Linear-time temporal logic (LTL) \cite{Pnueli77} formulas are defined as follows
\begin{align*}
	\psi := a \mid \psi \land \psi \mid \neg \psi \mid \ltlN \psi  \mid \psi \ltlU \psi,
\end{align*}
where $a \in \ap$ is an atomic proposition.
The basic formula $a$ requires that the AP $a$ holds in the current state, $\ltlN \psi$ requires that $\psi$ holds in the \emph{next} step, and $\psi_1 \ltlU \psi_2$ requires that $\psi_1$ holds \emph{until} $\psi_2$ eventually holds. 
We use the usual derived constants and connectives $\mathit{true}, \mathit{false}, \lor, \to, \leftrightarrow$, and the temporal operators  \emph{eventually} $\ltlF \psi := \mathit{true} \ltlU \psi$, and \emph{globally} $\ltlG \psi := \neg \ltlF \neg \psi$.
Given a trace $t \in (2^\ap)^\omega$, we define the semantics of LTL for each time point $i \in \nat$ as follows:
\begin{align*}
	t, i &\models a &\text{iff } \quad&a \in t(i) \\
	t, i&\models \psi_1 \land \psi_2 &\text{iff } \quad &t,i \models \psi_1 \text{ and } t, i \models \psi_2\\
	t, i &\models \neg \psi &\text{iff } \quad &t, i \not\models \psi\\
	t, i &\models \ltlN \psi &\text{iff } \quad &t, i+1 \models \psi\\
	t, i &\models \psi_1 \ltlU \psi_2  &\text{iff } \quad &\exists k \geq i\ldot t, k \models \psi_2 \text{ and } \\
	&&&\quad\quad\forall i \leq j < k\ldot t, j \models \psi_1 \span \span
\end{align*}
We write $t \models_\mathit{LTL} \psi$ if $t$ satisfies $\psi$, i.e., $t, 0 \models \psi$.

\paragraph*{Deterministic Parity Automata}\label{sec:dpa}

A deterministic parity automaton (DPA) is a tuple $\calA = (\Sigma, Q, q_0, \allowbreak \delta, c)$ where $\Sigma$ is a finite alphabet, $Q$ is a finite set of states, $q_0 \in Q$ is an initial state, $\delta : Q \times \Sigma \to Q$ is a transition function, and $c : Q \to \nat$ colors each state with a natural number.
For an infinite word $u \in \Sigma^\omega$, we define $\mathit{run}(\calA, u) \in Q^\omega$ as the unique run of $\calA$ on $u$. 
Formally, $\mathit{run}(\calA, u)$ is the unique word in $Q^\omega$ such that $\mathit{run}(\calA, u)(0) = q_0$ and for every $i \in \nat$, $\mathit{run}(\calA, u)(i+1) = \delta(\mathit{run}(\calA, u)(i), u(i))$.
That is, we start the run in the initial state $q_0$ and progress by following $\calA$'s transition function using the letters from $u$.
The acceptance condition in DPAs is based on the color of the states (as given by $c$).
An infinite run in $Q^\omega$ is accepting if the minimal color that occurs infinitely often is even. 
We write $\calL(\calA) \subseteq \Sigma^\omega$ for the language of the automaton, which consists of all words where the unique run is accepting.
In this paper, we use a parity acceptance condition as DPAs capture every $\omega$-regular property and thus every LTL-expressible property:

\begin{lemma}[\!\!\cite{EsparzaKRS17,Piterman07}]\label{lem:ltl}
	For every LTL formula $\psi$, we can effectively construct a DPA $\calA_\psi = (2^\ap, Q_\psi, q_{0, \psi}, \delta_\psi,c_\psi)$ such that $\calL(\calA_\psi) = \big\{ t \in (2^\ap)^\omega \mid t \models_\mathit{LTL} \psi \big\}$.
\end{lemma}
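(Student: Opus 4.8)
The plan is to establish this standard LTL-to-deterministic-automaton translation in two stages: first translate $\psi$ into a nondeterministic automaton, and then determinize while upgrading the acceptance condition to parity. First I would build a nondeterministic B\"uchi automaton (NBA) $\calB_\psi$ over the alphabet $2^\ap$ with $\calL(\calB_\psi) = \{t \in (2^\ap)^\omega \mid t \models_\mathit{LTL} \psi\}$, following the classical construction of Vardi and Wolper. The states are the \emph{consistent} subsets of the Fischer--Ladner closure of $\psi$ (the subformulas of $\psi$ together with their negations). A transition from a macrostate $\Phi$ on a letter $\sigma \in 2^\ap$ leads to any consistent successor $\Phi'$ respecting the one-step unfoldings of the operators: $a \in \Phi$ iff $a \in \sigma$, the requirement $\ltlN \chi \in \Phi$ iff $\chi \in \Phi'$, and $\chi_1 \ltlU \chi_2 \in \Phi$ iff either $\chi_2 \in \Phi$, or both $\chi_1 \in \Phi$ and $\chi_1 \ltlU \chi_2 \in \Phi'$. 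A generalized B\"uchi condition (one B\"uchi set per until-subformula, later merged into a single set by the usual counter construction) guarantees that every eventuality $\chi_1 \ltlU \chi_2$ asserted along a run is eventually discharged, ruling out runs that postpone $\chi_2$ forever. The standard inductive argument then shows an accepting run on $t$ exists exactly when $t \models_\mathit{LTL} \psi$.

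The hard part will be the determinization, and this is where the cited results do the heavy lifting. One cannot hope for a \emph{deterministic} B\"uchi automaton in general, since languages such as $\ltlF\ltlG a$ are not DBA-recognizable, so a richer acceptance condition is genuinely needed. Here I would invoke Safra's determinization with Piterman's optimization \cite{Piterman07} to convert $\calB_\psi$ into a deterministic parity automaton $\calA_\psi = (2^\ap, Q_\psi, q_{0,\psi}, \delta_\psi, c_\psi)$ over the same alphabet. The state set $Q_\psi$ consists of bounded-depth labeled trees of macrostates (Safra/Piterman trees), the transition function $\delta_\psi$ acts deterministically on these trees, and the coloring $c_\psi$ is read off from the births and deaths of tree nodes, arranged so that the minimal color occurring infinitely often is even precisely when some branch of the deterministic run witnesses infinitely many visits to the B\"uchi set of $\calB_\psi$. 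This yields $\calL(\calA_\psi) = \calL(\calB_\psi)$, with determinism immediate from the functional transition relation.

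Finally I would check effectiveness and terminate the argument. The closure of $\psi$ is finite, so $\calB_\psi$ is a finite NBA; Safra/Piterman trees over a finite macrostate set are bounded in number, so $Q_\psi$ is finite and every component of $\calA_\psi$ is computable from $\psi$. Composing the two correctness statements gives $\calL(\calA_\psi) = \{t \mid t \models_\mathit{LTL} \psi\}$, as required. I expect the determinization to be the sole technical obstacle; the entire combinatorial subtlety lives there, whereas the tableau construction and the effectiveness bookkeeping are routine. As an alternative route one could instead follow the direct, Safraless compositional translation of \cite{EsparzaKRS17}, which bypasses B\"uchi determinization by tracking formula progression in a master automaton and assembling the parity condition from rank information; either route establishes the lemma.
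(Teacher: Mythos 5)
This lemma is a classical result that the paper does not prove itself; it is discharged entirely by the citations to Piterman's Safra-style determinization and to the direct Safraless LTL-to-DPA translation of Esparza et al. Your reconstruction (Vardi--Wolper tableau to an NBA, then Safra/Piterman determinization to a DPA, with the Safraless route noted as an alternative) is correct and is exactly the argument underlying those citations, so it matches the paper's approach.
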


We emphasize that our construction also applies to other automaton types. 
For example, if the LTL body of our hyperproperty can be expressed as a \emph{deterministic} Büchi (resp.~safety) automaton, our later construction in \Cref{sec:game-ii} yields a Büchi (resp.~safety) game.

\paragraph*{HyperLTL}\label{sec:hyperltlt}

HyperLTL \cite{ClarksonFKMRS14} extends LTL with explicit quantification over (execution) traces of the system.
Let $\pathVars = \{\pi_1, \pi_2, \ldots\}$ be a set of \emph{trace variables}.
HyperLTL formulas are generated by the following grammar
\begin{align*}
	\psi &:= a_\pi \mid \psi \land \psi \mid \neg \psi \mid \ltlN \psi  \mid \psi \ltlU \psi \\
	\varphi &:=\forall \pi \ldot \varphi \mid \exists \pi \ldot \varphi \mid \psi
\end{align*}
where $a \in \ap$ is an atomic proposition, and $\pi \in \pathVars$ is a trace variable. 
Each HyperLTL formula thus has the form $\varphi = \quant_1 \pi_1 \ldots \quant_n \pi_n\ldot \psi$, where $\quant_1, \ldots, \quant_n \in \{\forall, \exists\}$ are quantifiers, $\pi_1, \ldots, \pi_n \in \pathVars$ are trace variables, and $\psi$ is an LTL formula over trace-variable-indexed APs. 
The formula quantifies over traces $\pi_1, \ldots, \pi_n$ in the system (in typical first-order semantics) and requires that the resulting combination of $n$ traces satisfies the temporal requirement expressed by the trace-variable-indexed LTL formula $\psi$. 

A trace assignment is a partial function $\Pi : \pathVars \rightharpoonup (2^\ap)^\omega$ that maps trace variables to traces. 
Given $\Pi$ we can evaluate the LTL body $\psi$ in each time point $i \in \nat$:
\begin{align*}
	\Pi, i &\models a_\pi &\text{iff } &a \in \Pi(\pi)(i) \\
	\Pi, i&\models \psi_1 \land \psi_2 \!\!\!&\text{iff }  &\Pi,i \models \psi_1 \text{ and } \Pi, i \models \psi_2\\
	\Pi, i &\models \neg \psi &\text{iff }  &\Pi, i \not\models \psi\\
	\Pi, i &\models \ltlN \psi &\text{iff }  &\Pi, i+1 \models \psi\\
	\Pi, i &\models \psi_1 \ltlU \psi_2  \!\!\!&\text{iff }  &\exists k \geq i\ldot \Pi, k \models \psi_2 \text{ and } \\
	&&&\quad\quad\forall i \leq j < k\ldot \Pi, j \models \psi_1 \span \span
\end{align*}
Temporal and Boolean operators are evaluated as for LTL.
Whenever we evaluate an indexed AP $a_\pi$, we look at the trace bound to $\pi$ and check if $a$ currently holds on this trace. 
Given a KS $\calK$, the quantifier prefix in HyperLTL then adds traces to the trace assignment in typical first-order fashion:
\begin{align*}
	\Pi &\models_\calK \psi &\text{iff } \quad& \Pi, 0 \models \psi\\
	\Pi &\models_\calK \forall \pi \ldot \varphi &\text{iff } \quad&\forall t \in \traces(\calK)  \ldot \Pi[\pi \mapsto t] \models_\calK \varphi\\
	\Pi &\models_\calK \exists \pi \ldot \varphi  &\text{iff } \quad&\exists t \in \traces(\calK)  \ldot \Pi[\pi \mapsto t] \models_\calK  \varphi
\end{align*}
We say $\calK$ satisfies $\varphi$, written $\calK \models \varphi$, if $\emptyset \models_\calK \varphi$, where $\emptyset$ denotes the trace assignment with empty domain. 

\section{Game-Based Verification of $\forall^*\exists^*$}\label{sec:gameBased}

Model-checking a HyperLTL formula with $k$ quantifier alternations is $k$-fold exponential \cite{ClarksonFKMRS14,Rabe16}, and complete methods typically utilize expensive operations like automata complementation \cite{FinkbeinerRS15} and inclusion checking \cite{BeutnerF23}.
For $\forall^*\exists^*$ properties, we can soundly (but incompletely) \emph{approximate} the expensive model-checking problem by, instead, constructing a game and searching for a \emph{strategy} that defines witness paths for existentially quantified traces \cite{CoenenFST19,BeutnerF22}.

\subsection{Parity Games}

To model the dynamics of the verification game, we use a turn-based game played between a verifier and a refuter (as done by \cite{BeutnerF22}). 

A parity game (PG) is a tuple $\calG = (V_\veri, V_\refu, v_\mathit{init}, \directions, \allowbreak E, c)$, where $V := V_\veri \uplus V_\refu$ is the set of game vertices, partitioned into vertices controlled by the verifier ($V_\veri$) and refuter ($V_\refu$), $v_\mathit{init} \in V$ is the initial vertex of the game, $\directions$ is a set of directions, $E : V \times \directions \to V$ is the transition function of the game, and $c : V \to \nat$ assigns each state a color used for the parity acceptance condition.

The game is played on an underlying graph whose vertices are controlled by the verifier ($\veri$) or refuter ($\refu$). 
Whenever the game is in a vertex controlled by a player $p \in \{\veri, \refu\}$, the respective player can determine to which vertex the game should progress by choosing some direction from $\directions$.
A \emph{strategy} for the verifier is a function $\sigma : V^*\cdot V_\veri \to \directions$.
The strategy reads a sequence of vertices $v_1\cdots v_n$ (ending in a vertex $v_n \in V_\veri$ controlled by the verifier) and determines a direction $\sigma(v_1\cdots v_n) \in \directions$ in which the game should progress.
A play $\rho \in V^\omega$ is compatible with a strategy $\sigma$ for $\veri$ if  $\rho(0) = v_\mathit{init}$ (i.e., the play starts in $\calG$'s initial vertex), and for every $i \in \nat$, with $\rho(i) \in V_\veri$, we have $\rho(i+1) = E\big(\rho(i), \sigma(\rho[0, i])\big)$.
That is, we construct the play iteratively;
Whenever the game is in a vertex controlled by the verifier, we query strategy $\sigma$ on the current prefix to obtain a direction and update the vertex based on $\calG$'s transition function.
We say a play $\rho \in V^\omega$ is \emph{even} if the minimal color that appears infinitely often on $\rho$ (according to coloring $c$) is even (similar to the acceptance condition used for DPAs).
The verifier wins $\calG$ if there exists a strategy $\sigma$ for the verifier such that every play compatible with $\sigma$ is even.

\subsection{The Verification Game for $\forall^*\exists^*$}

Assume a fixed system $\calK = (S, s_\mathit{init}, \directions, \kappa, \ell)$ and a $\forall\exists$ HyperLTL formula $\forall \pi_1. \exists \pi_2. \psi$.
For our game construction, we represent the temporal requirement expressed by $\psi$ -- the LTL body of $\varphi$ -- as a deterministic automaton. 
Recall that the atomic propositions in $\psi$ are indexed with trace variables, i.e., $\psi$ is an LTL formula over $$\ap_\psi := \big\{ a_\pi \mid a \in \ap, \pi \in  \{\pi_1, \pi_{2}\}  \big\}.$$
We assume that $\calA_\psi = (2^{\ap_\psi}, Q_\psi, q_{0, \psi}, \delta_\psi, c_\psi)$ is a DPA over alphabet $2^{\ap_\psi}$ that recognizes $\psi$, i.e., $\calL(\calA_\psi) = \{  t \in (2^{\ap_\psi})^\omega \mid t \models_\mathit{LTL} \psi \}$ (cf.~\Cref{lem:ltl}).
We can then define a parity game, denoted $\gamee{\calK}{\varphi}$, that captures the iterative construction of traces \cite{BeutnerF22}:

\begin{definition}[$\gamee{\calK}{\varphi}$,\cite{BeutnerF22}]\label{def:pg-const}
	Define the parity game $\gamee{\calK}{\varphi}$ by $\gamee{\calK}{\varphi} := (V_\veri, V_\refu, \allowbreak v_\mathit{init}, \directions, E, c)$, where
	
	\begin{itemize}
		\item $V_\veri := \big\{ \gamenode{s_1,s_2, q, \veri} \mid s_1, s_2 \in S \uplus \{s_\mathit{init}\} \land q \in Q_\psi \big\}$,
		\item $V_\refu := \big\{ \gamenode{s_1, s_2, q, \refu} \mid s_1, s_2 \in S \uplus \{s_\mathit{init}\} \land q \in Q_\psi \big\}$, 
		\item $v_\mathit{init} := \gamenode{s_\mathit{init},  s_\mathit{init}, \allowbreak q_{0, \psi}, \refu}$, 
		\item the set of direction $\directions$ is the same as in $\calK$,
		\item the transition function $E : V \times \directions \to V$ is defined by
		\begin{align*}
			&E\big(\gamenode{s_1,s_2, q, \veri}, d\big) := \big\langle s_1, \kappa(s_2, d), q, \refu \big\rangle\\
			&E\big(\gamenode{s_1,s_2, q, \refu}, d\big) := \\
			&\quad\quad\quad\Big\langle \kappa(s_1, d), s_2, \delta_\psi\Big(q, \bigcup_{i=1}^{2} \big \{ a_{\pi_i} \mid a \in \ell(s_i) \big\} \Big), \veri \Big\rangle,
		\end{align*}
		\item $c\big( \gamenode{s_1, s_2, q, p} \big) := c_\psi(q)$.
	\end{itemize}
\end{definition}

In our game, each vertex $\gamenode{s_1, s_2, q, p}$ tracks a state $s_1$ for $\pi_1$ (called the $\pi_1$-copy), a state $s_2$ for $\pi_2$ (called the $\pi_2$-copy), the current state $q$ of $\calA_\psi$, and the current player $p \in \{\veri, \refu\}$.
In the initial vertex $v_\mathit{init}$, every system copy starts in the initial state, and $\calA_\psi$ begins tracking in its initial state $q_{0, \psi}$.
Intuitively, in each round of the game, the refuter can update the $\pi_1$-copy by moving along some transition in $\calK$, followed by the verifier updating the $\pi_2$-copy; afterward, the game repeats. 
Formally, whenever in a vertex $\gamenode{s_1, s_2, q, \veri}$, the verifier can choose a direction $d \in \directions$, and the $\pi_2$-copy is updated to $\kappa(s_2, d)$ (the first case in the definition of $E$).
Analogously, when in vertex $\gamenode{s_1, s_2, q, \refu}$ the refuter can update the $\pi_1$-copy along some direction (the second case in the definition of $E$). 
When the refuter moves a round of the game has concluded, so we update the state of $\calA_\psi$.
For each $i \in \{1, 2\}$, we thus read of the APs that currently hold in state $s_i$ ($\ell(s_i) \subseteq \ap$) and index all these APs with $\pi_i$ to obtain a letter $\bigcup_{i=1}^{2} \big \{ a_{\pi_i} \mid a \in \ell(s_i) \big\} \in 2^{\ap_\psi}$, which we feed to $\calA_\psi$'s transition function.

As we track separate states for $\pi_1$ and $\pi_2$, every infinite play in $\calG_{\calK, \varphi}^{\forall\exists}$ defines two paths in $\calK$; one for $\pi_1$ (where each step is controlled by the refuter), and one for $\pi_2$ (controlled by the verifier). 
In $\gamee{\calK}{\varphi} $, each vertex $\gamenode{s_1, s_2, q, p}$ is assigned color $c_\psi(q)$ using $\calA_\psi$'s coloring function, so an infinite play in $\gamee{\calK}{\varphi}$ is won by the verifier iff the paths constructed for $\pi_1, \pi_2$ during the gameplay are accepted by $\calA_\psi$ and thus satisfy $\psi$.
Any winning strategy for $\veri$ thus step-wise constructs a witness trace for $\pi_2$, no matter how the refuter constructs $\pi_1$.
It is not hard to see that the existence of a winning strategy for the verifier thus implies that we can always find a witness trace for $\pi_2$ in the HyperLTL semantics:

\begin{restatable}[\!\!\cite{BeutnerF22}]{lemma}{soundnessTheo}\label{theo:soundness-fe}
	If the verifier wins $\gamee{\calK}{\varphi}$, then $\calK \models \varphi$.
\end{restatable}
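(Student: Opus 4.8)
The plan is to read a winning verifier strategy as a Skolem witness for the existential quantifier $\pi_2$ and to verify the semantic condition $\emptyset \models_\calK \forall \pi_1. \exists \pi_2. \psi$ directly. Let $\sigma$ be a winning strategy for the verifier in $\gamee{\calK}{\varphi}$. I would fix an \emph{arbitrary} trace $t_1 \in \traces(\calK)$ and construct a matching $t_2 \in \traces(\calK)$ with $[\pi_1 \mapsto t_1, \pi_2 \mapsto t_2], 0 \models \psi$; since $t_1$ is arbitrary, this is exactly $\calK \models \varphi$.

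First I would make the refuter replay $t_1$. Since $t_1 \in \traces(\calK)$, pick a path $\tau_1 \in \paths(\calK)$ with $\ell(\tau_1) = t_1$; by the definition of a path there are directions $d_0, d_1, \ldots \in \directions$ with $\tau_1(0) = s_\mathit{init}$ and $\tau_1(k+1) = \kappa(\tau_1(k), d_k)$. Let the refuter, at its $k$-th move, play $d_k$. As the initial vertex $v_\mathit{init}$ is a refuter vertex and the two players strictly alternate, pitting this behaviour against $\sigma$ determines a unique play $\rho$ of $\gamee{\calK}{\varphi}$; by construction $\rho$ follows $\sigma$ at every verifier vertex, so $\rho$ is compatible with $\sigma$ and hence even.

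The technical core is a synchronisation invariant tracked at the refuter vertices, which sit at the even positions. Writing $\rho(2k) = \langle s_1^{(k)}, s_2^{(k)}, q^{(k)}, \refu \rangle$, I would prove by induction on $k$ that: (i) $s_1^{(k)} = \tau_1(k)$, so the $\pi_1$-copy reconstructs $\tau_1$ exactly; (ii) $s_2^{(0)} s_2^{(1)} \cdots$ is a path $\tau_2 \in \paths(\calK)$, whose trace $t_2 := \ell(\tau_2)$ therefore lies in $\traces(\calK)$; and (iii) $q^{(k)} = \mathit{run}(\calA_\psi, u)(k)$ for the combined word $u \in (2^{\ap_\psi})^\omega$ given by $u(k) := \{a_{\pi_1} \mid a \in t_1(k)\} \cup \{a_{\pi_2} \mid a \in t_2(k)\}$. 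All three claims drop out of the two cases of $E$: a refuter move applies $\kappa$ to the $\pi_1$-component and advances $\calA_\psi$ on precisely the letter $u(k)$ read off $s_1^{(k)}, s_2^{(k)}$, whereas a verifier move applies $\kappa$ to the $\pi_2$-component and leaves $q$ unchanged.

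Finally I would transfer the parity condition from the play to the automaton run. Since each vertex is coloured by $c_\psi$ of its $Q_\psi$-component and the automaton state changes only on refuter moves, the colour sequence of $\rho$ is $c_\psi(q^{(0)}), c_\psi(q^{(1)}), c_\psi(q^{(1)}), c_\psi(q^{(2)}), c_\psi(q^{(2)}), \ldots$, so the set of colours occurring infinitely often along $\rho$ equals the set of colours $c_\psi(q^{(k)})$ with $q^{(k)}$ occurring infinitely often in $\mathit{run}(\calA_\psi, u)$. Hence the minimal colour seen infinitely often agrees, and as $\rho$ is even the run is accepting, i.e. $u \in \calL(\calA_\psi)$; by \Cref{lem:ltl} this is $u \models_\mathit{LTL} \psi$, which unfolds to $[\pi_1 \mapsto t_1, \pi_2 \mapsto t_2], 0 \models \psi$, as required. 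The only real obstacle is bookkeeping: getting the turn order right so that the letter fed to $\calA_\psi$ at round $k$ is read \emph{before} the refuter update and matches position $k$ of the combined trace. Phrasing the invariant at refuter vertices (rather than over all vertices) makes this alignment fall out cleanly.
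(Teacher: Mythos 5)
Your proof is correct and takes essentially the same route as the paper: the paper cites this $\forall\exists$ lemma from prior work, but its own proof of the generalization (\Cref{theo:soundness}) uses exactly your idea of reading a winning strategy as a step-wise Skolem witness for $\pi_2$, simulating the game against the path realizing an arbitrary $t_1$, and transferring acceptance of the induced run of $\calA_\psi$ from the parity condition on plays. Your inductive synchronization invariant at refuter vertices and the colour-doubling argument are simply the worked-out bookkeeping of that same simulation.
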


\begin{remark}\label{rem:usage}
	The game-based approach can be used for automated verification by constructing the game and solving it via an off-the-shelf parity solver (the original motivation of \cite{CoenenFST19,BeutnerF22}).
	However, the appeal of \Cref{theo:soundness-fe} is much broader.
	For example, the user can construct a strategy by using domain knowledge, which enables \emph{interactive} verification even in situations where automated model-checking does not scale (see, e.g., \cite{CorrensonF25}).
	Likewise, checking if a given strategy for the verifier wins $\gamee{\calK}{\varphi}$ is often easier than computing a strategy from scratch; strategies are easy-to-check certificates.
\end{remark}

\section{Game-Based Verification Beyond $\forall^*\exists^*$}\label{sec:game-ii}

The game-based approach from the previous section soundly approximates the semantics of $\forall^*\exists^*$ formulas.
Unfortunately, the approach is limited to $\forall^*\exists^*$ properties and becomes unsound when considering properties beyond $\forall^*\exists^*$.

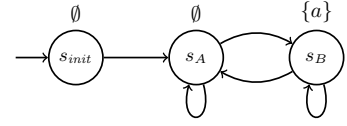
\begin{figure}
	
	\centering
	\scalebox{0.8}{
	\begin{tikzpicture}
		
		\node[circle, draw, thick, label=above:$\emptyset$,minimum size=9mm] at (-2,0) (n0) {\small$s_\mathit{init}$};
		
		\node[circle, draw, thick, label=above:$\emptyset$,minimum size=9mm] at (0,0) (n1) {\small$s_A$};
		
		\node[circle, draw,  thick, label=above:$\{a\}$,minimum size=9mm] at (2,0) (n2) {\small$s_B$};
		
		\draw[->, thick] (n1) to[bend left] (n2);
		\draw[->, thick] (n2) to[bend left] (n1);
		
		\draw[->,  thick] (n1) to[loop below] (n1);
		\draw[->,  thick] (n2) to[loop below] (n2);
		
		\draw[->, thick] (n0) to (n1);
		
		\draw[->, thick] (-3, 0) to (n0);
	\end{tikzpicture}}
	
	\caption{Simple Kripke structure over $\ap = \{a\}$ \vspace{-3mm}}\label{fig:example-unsound}
\end{figure}

\begin{example}\label{ex:unsound}
	Consider the Kripke structure $\calK$ over $\ap = \{a\}$ in \Cref{fig:example-unsound} and the $\exists^1\forall^1$ HyperLTL formula 
	\begin{align*}
		\varphi := \exists \pi_1. \forall \pi_2\ldot (\ltlN\ltlN\ltlN a_{\pi_1}) \leftrightarrow (\ltlN\ltlN a_{\pi_2}),
	\end{align*}
	where the LTL body expresses that AP $a$ should hold in the third step on $\pi_1$ iff it holds in the second step on $\pi_2$. 
	Clearly, $\calK \not\models \varphi$; no matter what \emph{fixed} trace we choose for $\pi_1$, we can always find a trace for $\pi_2$ that violates the LTL body.
	
	Now let us na\"ively adopt the game used in \Cref{sec:gameBased} to this $\exists\forall$ property. 
	That is, we, again, maintain states for all trace variables and let the verifier and refuter iteratively update the existentially and universally quantified system copies, respectively.  
	The resulting game is won by the verifier:
	During the gameplay, the refuter (who controls the state for the universally quantified $\pi_2$) has to decide in the \emph{second} step whether or not AP $a$ should hold (by moving to $s_B$ or $s_A$).
	Only later in the game (in the third round) does the verifier choose if $a$ holds on $\pi_1$. 
	By that time, the verifier can thus react to what the refuter has done in the previous step and set $a$ on $\pi_1$ appropriately, ensuring that $(\ltlN\ltlN\ltlN a_{\pi_1}) \leftrightarrow (\ltlN\ltlN a_{\pi_2})$ is satisfied. 
	The game is won by the verifier, even though the property does not hold. 
\end{example}

\subsection{Multiplayer Games and Partial Information}

In this paper, we propose a novel game-based approximation that applies to arbitrary quantifier structures. 
Our simple yet powerful observation is that the unsoundness is directly linked to the \emph{knowledge} of the player. 
In \Cref{ex:unsound}, the verifier could win the game as it can observe what the refuter did in the previous steps. 
In contrast, in the semantics of an $\exists\forall$ property, any witness for the existential quantifier must be \emph{independent} of the choice for the universal quantifier. 
By identifying knowledge as the core reason for unsoundness, we can extend the game-based approach to \emph{arbitrary} HyperLTL formulas by utilizing \emph{partial information}.
The technical challenge is then to link the knowledge/information of a player to the first-order semantics of HyperLTL.

As our underlying game formalism, we use (sequential) multiplayer parity games \cite{MalvoneMS16}, which extend parity games with multiple players and partial information. 
	A multiplayer parity game under incomplete information (MPG$_\mathit{ii}$) is a tuple 
	$$\calG = (\agents, \{V_p\}_{p \in \agents}, v_\mathit{init}, \directions, E, \allowbreak \{\sim_p\}_{p \in \agents} , c),$$
	where $\agents$ is a finite set of players; for each $p \in \agents$, $V_p$ is a finite set of vertices controlled by $p$. 
	We write $V := \biguplus_{p \in \agents} V_p$ for the set of all vertices in the game (which we assume to be disjoint).
	$v_\mathit{init} \in V$ is the initial vertex of the game, $\directions$ is a set of directions, and $E : V \times \directions \to V$ is the transition function of the game. 
	For each player $p \in \agents$, $\sim_p \subseteq V \times V$ is an equivalence relation on the set of vertices.
	Lastly, $c : V \to \nat$ assigns each vertex a color.

An MPG$_\mathit{ii}$ models the joint behavior of the players in $\agents$, where each player $p \in \agents$ controls a disjoint set of vertices $V_p$.
If the game is currently in a vertex in $V_p$, player $p$ can determine where the game should move next by choosing some direction from $\directions$.
We obtain a standard two-player game (cf.~\Cref{sec:gameBased}) by setting $\agents = \{\veri, \refu\}$. 
Moreover, each player $p \in \agents$ is assigned an indistinguishability relation $\sim_p$, i.e., if $v \sim_p v'$, player $p$ cannot distinguish between $v$ and $v'$.
As usual, we assume that each player is at least able to distinguish whether or not it controls the current vertex.
That is, for every $v \sim_p v'$, we either have $v, v' \in V_p$ or $v, v' \in V \setminus V_p$.

\paragraph*{Strategies and Plays}

Two finite plays $\rho_1, \rho_2 \in V^*$ are indistinguishable for player $p \in \agents$, written $\rho_1 \sim_p \rho_2$, if $|\rho_1| = |\rho_2|$ and for every $0 \leq i < |\rho_1|$, $\rho_1(i) \sim_p \rho_2(i)$.  
A strategy for player $p \in \agents$ is a function $\sigma_p : V^*\cdot V_p \to \directions$, such that $\sigma_p(\rho_1) = \sigma_p(\rho_2)$ whenever $\rho_1 \sim_p \rho_2$. 
The strategy reads a sequence of vertices $v_1\cdots v_n$ (ending in a vertex $v_n \in V_p$) and determines a direction $\sigma_p(v_1\cdots v_n) \in \directions$ in which the game should progress.
This decision must conform to the agent's observations, i.e., if two finite plays appear indistinguishable for $p$, strategy $\sigma_p$ must choose the same direction on both plays.  
Within the game, we obtain infinite plays in $\calG$ by letting strategies for all players interact.
A \emph{global strategy} $\{\sigma_p\}_{p \in \agents}$ assigns each player $p \in \agents$ a strategy $\sigma_p$.
Every global strategy $\{\sigma_p\}_{p \in \agents}$ defines a unique infinite play $\rho \in V^\omega$, where $\rho(0) = v_\mathit{init}$ (i.e., the play starts in $\calG$'s initial vertex), and for every $i \in \nat$, we have $\rho(i+1) = E\big(\rho(i), \sigma_p(\rho[0, i])\big)$ where $p \in \agents$ is the unique player with $\rho(i) \in V_p$. 
That is, whenever the game is currently in a vertex controlled by player $p$, we query $p$'s strategy on the current prefix $\rho[0, i]$ to obtain a direction and update the game's vertex along that direction.
As before, we say a play $\rho$ is \emph{even} if the minimal color that appears infinitely often on $\rho$ is even.
We are interested in checking if a given \emph{coalition} of players $A \subseteq \agents$ can win the game. 
A coalition $A \subseteq \agents$ wins $\calG$, written $\winss{A}{\calG}$, if there exists strategies $\{\sigma_p\}_{p \in A}$ for the players in $A$, such that, no matter what strategies other players use, i.e., for every possible $\{\sigma_p\}_{p \in \agents\setminus A}$, the play resulting from the combined global strategy $\{\sigma_p\}_{p \in \agents}$ is even. 
That is, if the players in $A$ follow their strategies in $\{\sigma_p\}_{p \in A}$, the resulting play satisfies the parity winning condition no matter how the other players behave. 

\subsection{HyperLTL Verification as an MPG$_\mathit{ii}$}

We now present the core contribution of this paper: MPG$_\mathit{ii}$s allow for the sound verification of \emph{arbitrary} HyperLTL formulas. 
For this, assume 
\begin{align*}
	\varphi = \quant_1 \pi_1 \ldots \quant_n \pi_n\ldot \psi
\end{align*}
is a fixed HyperLTL formula over trace variables $\pi_1, \ldots, \pi_n$ with \emph{arbitrary} quantifier structure.
As in \Cref{sec:gameBased}, we assume that $\calA_\psi = (2^{\ap_\psi}, Q_\psi, q_{0, \psi}, \delta_\psi, c_\psi)$ is a DPA over alphabet $\ap_\psi := \big\{a_\pi \mid a \in \ap, \pi \in \{\pi_1, \ldots, \pi_n\} \big\}$ that recognizes $\psi$ (cf.~\Cref{lem:ltl}).

\begin{definition}[$\game{\calK}{\varphi}$]\label{def:construct-pgii}
	Define the MPG$_{\mathit{ii}}$ $\game{\calK}{\varphi}$ by 
	\begin{align*}
		\game{\calK}{\varphi} := (\agents, \{V_p\}_{p \in \agents}, v_\mathit{init}, \directions, E, \{\sim_p\}_{p \in \agents} , c),
	\end{align*}
	where
	\begin{itemize}
		\item $\agents := \{1, \ldots, n\}$,
		\item for each $p \in \agents$,
		\begin{align*}
			V_p &:= \{ \gamenode{s_1, \ldots, s_n, q, p} \mid \\
			&\quad\quad\quad s_1, \ldots, s_n \in S \uplus \{s_\mathit{init}\}, q \in Q_{\psi} \},
		\end{align*}
		\item $v_\mathit{init} := \gamenode{s_\mathit{init}, \ldots,  s_\mathit{init}, \allowbreak q_{0, \psi}, 1}$,
		\item the set of direction $\directions$ is the same as in $\calK$,
		\item the transition function $E : V \times \directions \to V$ is defined by
		\begin{align*}
			&E\Big( \big\langle s_1,\ldots, s_n, q, p \big\rangle, d\Big) := \\
			&\quad\quad\Big\langle s_1, \ldots, s_{p-1}, \kappa(s_p, d), s_{p+1}, \ldots, s_n, q, \mathit{nxt}(p)  \Big\rangle,
		\end{align*}
		for $p > 1$ and for $p = 1$ define
		\begin{align*}
			&E\Big( \big\langle s_1,\ldots, s_n, q, 1 \big\rangle, d\Big) :=  \Big\langle \kappa(s_1, d), s_2, \ldots, s_n, \\
			&\quad\quad\quad\quad\quad\quad \delta_\psi\Big(q, \bigcup_{i=1}^{n} \big \{ a_{\pi_i} \mid a \in \ell(s_i) \big\} \Big), \mathit{nxt}(1) \Big\rangle,
		\end{align*}
		where $\mathit{nxt}(p) := p + 1$ if $p < n$ and $\mathit{nxt}(n) := 1$,
		\item for each $p \in \agents$, 
		\begin{align*}
			\sim_p \, := \,&\Big\{ \big(\gamenode{s_1, \ldots, s_n, q, p'}, \gamenode{s'_1, \ldots, s'_n, q', p''} \big) \mid \\
			&\quad\quad\quad p' = p'' \land  \forall 1 \leq j \leq p \ldot s_j = s_j' \Big\},
		\end{align*}
		\item $c\big( \gamenode{s_1, \ldots, s_n, q, p}\big) := c_\psi(q)$.
	\end{itemize}
\end{definition}

The first key idea is to represent each trace variable in the formula as a separate player, so $\agents = \{1, \ldots, n\}$. 
The vertices in $\calG_{\calK, \varphi}$ are of the form $\gamenode{s_1, \ldots, s_n, q, p}$, where $s_1, \ldots, s_n$ track the current state of the system copies for $\pi_1, \ldots, \pi_n$, respectively,, $q \in Q_\psi$ tracks $\psi$, and $p \in \agents$ determines which player controls this vertex.
Each infinite play in $\calG_{\calK, \varphi}$, therefore, defines $n$ concrete paths (and thus traces) for $\pi_1, \ldots, \pi_n$.
As expected, we start each system in $\calK$'s initial state $s_\mathit{init}$, start $\calA_\psi$ in $q_{0, \psi}$, and let player $1$ begin.
Similar to  \Cref{sec:gameBased}, a vertex $\gamenode{s_1, \ldots, s_n, q, p}$ is assigned color $c_\psi(q)$.
The transition function then allows the players to update their system copy.
Whenever in a vertex $\gamenode{s_1, \ldots, s_n, q, p}$ where $p > 1$ (the first case in the definition of $E$), the direction $d$ (which is chosen by player $p$ controlling this vertex) updates the $\pi_p$-copy by moving along the chosen direction $d \in \directions$ to $\kappa(s_p, d)$.
Afterward, it is the next player's turn: $\mathit{nxt}(p)$ increases $p$ by $1$ and cycles back to player $1$ once the last player (player $n$) has acted.
The players thus take turns updating their system state; first, player $1$ updates the state of the $\pi_1$-copy, then player $2$ updates the state of the $\pi_2$-copy, and so forth, until, finally, player $n$ updates the state of the $\pi_n$-copy, and the game repeats with player $1$.
When it is player $1$'s turn (the second case in the definition of $E$), a game round has just concluded. 
In this case, the direction chosen by player $1$ is used to update the $\pi_1$ copy (similar to the other rounds).
At the same time, we update the automaton state of $\calA_\psi$ by reading the AP evaluation of $s_1, \ldots, s_n$, obtaining a letter $\bigcup_{i=1}^{n} \big \{ a_{\pi_i} \mid a \in \ell(s_i) \big\} \in 2^{\ap_\psi}$.

The last key ingredient is the partial information of each player.
The core problem of the game from \Cref{sec:gameBased} was that the players could observe the global state of the game, leading to unsound behavior.
MPG$_\mathit{ii}$s allow us to precisely determine the information that each player can act on. 
Once we have observed that knowledge is the key to a sound verification game, we can align the player's information with the HyperLTL semantics:
In a HyperLTL formula $\quant \pi_1 \ldots \quant \pi_{i-1}. \exists \pi_i. \quant \pi_{i+1} \ldots \quant \pi_n\ldot \psi$, the choice for $\pi_i$ is only based on the traces $\pi_1, \ldots, \pi_{i-1}$ as those are the traces that are already added to the trace assignment in the semantics of HyperLTL.
We can directly express this in our game definition:
For a player $p$ (that controls $\pi_p$), two vertices $\gamenode{s_1, \ldots, s_n, q, p'}$ and $\gamenode{s_1', \ldots, s_n', q', p''}$ appear indistinguishable if it is the same player's turn ($p' = p''$) and $s_j = s_j'$ for all $j \leq p$, i.e., the state of all traces quantified \emph{before} $\pi_p$ agrees.

\subsection{Soundness}

To use our game as a verification method, we are interested in the strategic ability of all players controlling existentially quantified system copies. 
That is, we define 
$$\agents_\exists := \{ i \in \agents \mid \quant_i = \exists \}.$$
We can then show that our game under partial information constitutes a sound verification approach:

\begin{restatable}{theorem}{sound}\label{theo:soundness}
	If $\winss{\agents_\exists}{\calG_{\calK, \varphi}}$, then $\calK \models \varphi$.
\end{restatable}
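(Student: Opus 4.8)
The plan is to show that a winning coalition strategy $\{\sigma_p\}_{p \in \agents_\exists}$ can be converted into Skolem witnesses for the existential quantifiers in the HyperLTL semantics, with the partial-information constraint guaranteeing exactly the independence that the first-order semantics demands. First I would fix a winning global strategy $\{\sigma_p\}_{p \in \agents_\exists}$ for the coalition $\agents_\exists$, witnessing $\winss{\agents_\exists}{\calG_{\calK, \varphi}}$. To prove $\emptyset \models_\calK \varphi$, I would unfold the HyperLTL semantics by induction on the quantifier prefix, processing the quantifiers from $\pi_1$ to $\pi_n$ in order. Universal quantifiers introduce an arbitrary trace $t_i \in \traces(\calK)$; existential quantifiers must be furnished with a witness, which I construct from the strategy $\sigma_i$ of player $i \in \agents_\exists$. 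The central observation is that in every game round the player controlling $\pi_i$ moves exactly once (players act in the cyclic order $1, 2, \ldots, n$), so a play of the game factors into $n$ paths of $\calK$, one per copy, and the moves for copy $i$ are precisely the directional choices that assemble the path for $\pi_i$.

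The key technical step is to set up a correspondence between partial plays of $\game{\calK}{\varphi}$ and partial trace assignments, and to show that the indistinguishability relation $\sim_i$ matches the information available when choosing $\pi_i$ in the semantics. Concretely, for player $i$, the relation $\sim_i$ identifies any two vertices that agree on the states $s_1, \ldots, s_i$ (and on whose turn it is), so $\sigma_i$ can depend only on the prefixes of the paths for $\pi_1, \ldots, \pi_i$ built so far. Since the HyperLTL witness for $\exists \pi_i$ is allowed to depend on the already-fixed traces $t_1, \ldots, t_{i-1}$ (and the witness trace is itself indexed by $i$), this is exactly the information $\sigma_i$ is permitted to use. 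I would make this precise by defining, given the fixed universal traces and the coalition strategy, an infinite play $\rho$ obtained by letting each universal player's direction at each round realize the next state of its prescribed trace $t_i$, while each existential player follows $\sigma_i$. The indistinguishability constraint ensures $\sigma_i$'s choices are well-defined functions of $t_1, \ldots, t_{i-1}$ together with the witnesses chosen for earlier existential copies, hence genuine Skolem functions in the HyperLTL sense.

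With the play $\rho$ in hand, I would read off the trace assignment $\Pi$ mapping each $\pi_i$ to $\ell$ applied to the $i$-th projected path, verifying that each is a legitimate element of $\traces(\calK)$ (this uses that $\kappa$ is total and every direction yields a valid successor, so every projected path is a genuine path of $\calK$). Because $\{\sigma_p\}_{p \in \agents_\exists}$ is \emph{winning}, the play $\rho$ is even regardless of how the universal players moved; by the coloring $c(\gamenode{s_1, \ldots, s_n, q, p}) = c_\psi(q)$ and the fact that the $\calA_\psi$-transition is applied once per round (when player $1$ moves) on the letter $\bigcup_{i=1}^{n}\{a_{\pi_i} \mid a \in \ell(s_i)\}$, the run of $\calA_\psi$ embedded in $\rho$ is exactly $\mathit{run}(\calA_\psi, u)$ for the combined trace $u$ determined by $\Pi$. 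Evenness of $\rho$ then gives $u \in \calL(\calA_\psi)$, i.e.\ $\Pi \models \psi$, which feeds the induction back up the quantifier prefix to conclude $\emptyset \models_\calK \varphi$.

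The main obstacle I anticipate is the bookkeeping in the induction step that simultaneously handles both quantifier kinds while keeping the strategy-dependence legitimate. For a block of alternations, a universal player's move is adversarial and arbitrary, whereas an existential player's move must be a fixed function of strictly earlier copies; the $\sim_i$ relation is what aligns these, but one has to argue carefully that the \emph{same} coalition strategy, played against \emph{all} universal choices at once, yields a consistent family of Skolem functions rather than choices that secretly peek at later universal traces. The delicate point is verifying that $\sigma_i$'s dependence is confined to $s_1, \ldots, s_i$: a player $i$ never observes the state $s_j$ for $j > i$ even though those states are updated later within the same round, and this is precisely the feature that the naive game of \Cref{sec:gameBased} lacked (cf.\ \Cref{ex:unsound}). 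I would therefore isolate a lemma stating that for every prefix of $\rho$ of length a multiple of $n$, the restriction of $\sigma_i$ to that prefix is determined by the $\pi_1, \ldots, \pi_i$ projections alone, and use it to define the Skolem functions cleanly before closing the induction.
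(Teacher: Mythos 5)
Your proposal is correct and follows essentially the same route as the paper's proof: you convert the winning coalition strategy into Skolem witnesses for the existential quantifiers, use the indistinguishability relation $\sim_p$ to confine each $\sigma_p$'s dependence to the states of $\pi_1,\ldots,\pi_p$ (the paper phrases this as viewing $\sigma_p$ as a function on the quotient $(V/\!\sim_p)^*$), and then identify path combinations consistent with these witnesses with plays of the game, where the inherited DPA coloring turns evenness into satisfaction of $\psi$. Your explicit induction on the quantifier prefix is just an inlined version of the standard Skolemization argument the paper cites, so the two proofs coincide in substance.
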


\begin{example}
	Consider some formula $\forall \pi_1. \exists \pi_2. \forall \pi_3. \psi$. 
	In our game definition, the player controlling $\pi_2$ can observe the current state of the $\pi_1$-copy and thus react to the behavior of player $1$. 
	However, it cannot base its decision on the behavior of player $3$.
	In the special case of $\exists\pi_1.\forall \pi_2$ properties (like in \Cref{ex:unsound}), player $1$ can only observe its own state. 
	It must thus use the same sequence of directions (and thus define the \emph{same} witness paths), no matter how player $2$ behaves (cf.~\cite{BeutnerF25c}). 
	In \Cref{ex:unsound}, player $1$ would thus lose this game; there is no winning behavior in the third step without observing player $2$'s behavior in the second step.
\end{example}

It is not hard to see that in the special case of $\forall^*\exists^*$ properties, the MPG$_\mathit{ii}$ $\game{\calK}{\varphi}$ coincides with the game $\gamee{\calK}{\varphi}$ from \Cref{sec:gameBased}.

\begin{lemma}\label{lem:same}
	Let $\varphi = \forall \pi_1\ldot \exists \pi_2 \ldot \psi$. Then $\winss{\{2\}}{\calG_{\calK, \varphi}}$ if and only if the verifier wins $\gamee{\calK}{\varphi}$ (cf.~\Cref{sec:gameBased}).
\end{lemma}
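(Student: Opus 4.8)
The plan is to first establish that, for $n = 2$, the two game graphs are \emph{identical} up to renaming the players, and then to reconcile the only genuine difference between them, namely the information available to each player. Writing the refuter as player~$1$ and the verifier as player~$2$ (so that $\agents_\exists = \{2\}$), a direct comparison of \Cref{def:pg-const} and \Cref{def:construct-pgii} shows that the vertex sets, the initial vertex, the set of directions $\directions$, the coloring, and the transition function of $\gamee{\calK}{\varphi}$ and $\game{\calK}{\varphi}$ coincide under this renaming: in both games player~$1$/the refuter moves first and simultaneously advances $\calA_\psi$, player~$2$/the verifier moves second and advances only the $\pi_2$-copy, and the color of every vertex is $c_\psi(q)$. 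Hence finite and infinite plays of the two games are in color-preserving bijection, and ``even'' means the same thing on both sides. What remains is purely an argument about strategies and their admissible information.

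The first observation I would use is that the automaton component $q$ is a deterministic function of the observable history. Player~$2$'s relation $\sim_2$ identifies two plays exactly when their $(s_1, s_2, p)$-projections agree at every position, and $q$ evolves only at player~$1$'s moves via $\delta_\psi$ applied to $\bigcup_{i} \{ a_{\pi_i} \mid a \in \ell(s_i)\}$, which depends only on $s_1, s_2$. Thus from its observation a player-$2$ strategy can reconstruct $q$, and hence the entire vertex history, so player~$2$ effectively has full information. For the forward direction, given a winning verifier strategy in $\gamee{\calK}{\varphi}$, I would transport it along the bijection to a strategy for player~$2$; the reconstruction argument shows this strategy is $\sim_2$-respecting, since $\sim_2$-equivalent histories carry the same reconstructed $q$ and hence receive the same output. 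Since the $\sim_1$-respecting strategies of player~$1$ form a \emph{subset} of all refuter behaviors, a strategy that beats every refuter in $\gamee{\calK}{\varphi}$ in particular beats every admissible player-$1$ strategy, witnessing $\winss{\{2\}}{\game{\calK}{\varphi}}$.

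The converse is the delicate direction and the one I expect to be the main obstacle, because restricting player~$1$ to observe only the $s_1$-sequence could, a priori, let player~$2$ win $\game{\calK}{\varphi}$ without winning the full-information game $\gamee{\calK}{\varphi}$. To rule this out I would show that player~$1$'s information restriction never shrinks the set of plays it can force against a fixed player-$2$ strategy. Take a $\sim_2$-winning strategy $\sigma_2$, view it as a full-information verifier strategy $\sigma_\veri$ via the bijection, and let $\sigma_\refu$ be an \emph{arbitrary} refuter strategy with resulting play $\rho$. The key point is that $\sim_1$ relates only finite plays of \emph{equal length}, so the prefixes of $\rho$ ending at the successive player-$1$ decision points lie in pairwise distinct $\sim_1$-classes; consequently one can define a $\sim_1$-respecting player-$1$ strategy $\sigma_1$ that outputs exactly the moves $\sigma_\refu$ makes along $\rho$ (and anything elsewhere) without conflict. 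A straightforward induction then shows that $\sigma_1$ played against $\sigma_2$ reproduces $\rho$ step by step, using the bijection and the $q$-reconstruction so that player~$2$'s choices along the reproduced play agree with those along $\rho$. As $\sigma_2$ beats every $\sim_1$-respecting strategy, $\rho$ is even; since $\sigma_\refu$ was arbitrary, $\sigma_\veri$ is winning in $\gamee{\calK}{\varphi}$. The two points requiring care are the well-definedness of $\sigma_1$ (guaranteed by the distinct-length observations) and the base/step of the reproduction induction.
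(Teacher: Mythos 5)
Your proof is correct, and it actually supplies more than the paper does: the paper states this lemma without proof, introducing it only with ``it is not hard to see,'' so your argument fills in precisely the details being elided. You correctly isolate the only two points where the equivalence could conceivably fail. First, player $2$ does not observe the automaton component $q$; your reconstruction argument (that $q$ is a deterministic function of the observed $(s_1,s_2)$-history, since $\delta_\psi$ is applied only at player $1$'s moves and reads only the labels of $s_1,s_2$) shows this costs nothing. Second, in $\game{\calK}{\varphi}$ the coalition $\{2\}$ only has to beat $\sim_1$-respecting adversaries, which is a priori weaker than beating every refuter behavior in $\gamee{\calK}{\varphi}$; your observation that $\sim_1$ only relates finite plays of equal length, so that the player-$1$ decision prefixes of any single play lie in pairwise distinct $\sim_1$-classes and the play can hence be reproduced by a $\sim_1$-respecting strategy, closes that direction cleanly. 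One small refinement is needed in the forward direction: the paper defines strategies on all of $V^*\cdot V_p$, not just on prefixes of actual plays, so the verifier strategy transported to player $2$ need not literally satisfy the $\sim_2$-constraint on vertex sequences that are not paths of the game graph (there the $q$-reconstruction does not apply). You should therefore define $\sigma_2$ classwise: on a $\sim_2$-class containing a valid play prefix (which is unique in its class, by your reconstruction) take $\sigma_\veri$'s value there, and pick a constant direction on classes containing no valid prefix. This strategy is $\sim_2$-respecting by construction and agrees with $\sigma_\veri$ on every reachable history, which is all that the outcome depends on. That is a presentational fix, not a gap; the substance of your argument stands.
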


Similar to the $\forall^*\exists^*$ game from \Cref{sec:gameBased}, we can use $\game{\calK}{\varphi}$ not only for automated verification but also as a foundation for interactive verification and certificates (cf.~\Cref{rem:usage}).

\subsection{Hierarchical Information}\label{sec:hir}

Multiplayer games under imperfect information are often undecidable \cite{PnueliR90,BerthonMM17}, i.e., there exists no general algorithm to check if a given group of players can win the game. 
Fortunately, there exists a well-known class of imperfect information games that we can solve effectively.
An MPG$_\mathit{ii}$ $(\agents, \{V_p\}_{p \in \agents}, v_\mathit{init}, \directions, E, \{\sim_p\}_{p \in \agents} , c)$ is played under hierarchical information if there exists a total order $\prec$ on $\agents$ such that for every $p' \prec p$, we have $\sim_p \subseteq \sim_{p'}$.
That is, we can order the players such that the information is hierarchical, i.e., player $p$ observes at least as much as all smaller players (w.r.t.~$\prec$). 
By adopting standard techniques, we can show that we can decide if a given group of players can win a game played under hierarchical information \cite{Reif84,peterson2002decision,FinkbeinerS05}.

\begin{restatable}{lemma}{hierarchical}
	The MPG$_\mathit{ii}$ $\calG_{\calK, \varphi}$ from \Cref{def:construct-pgii} is played under hierarchical information.
	Consequently, it is decidable if $\winss{\agents_\exists}{\calG_{\calK, \varphi}}$.
\end{restatable}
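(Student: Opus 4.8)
The statement has two parts: first, that $\calG_{\calK, \varphi}$ is played under hierarchical information, and second, that winning for the coalition $\agents_\exists$ is decidable. The plan is to establish the hierarchical structure directly from the definition of the indistinguishability relations, and then invoke the known decidability result for games under hierarchical information.

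\emph{Hierarchical information.} The natural candidate for the order $\prec$ on $\agents = \{1, \ldots, n\}$ is simply the usual order on the indices, i.e.\ $p' \prec p$ iff $p' < p$. I would then verify that $p' < p$ implies $\sim_p \, \subseteq \, \sim_{p'}$. Recall from \Cref{def:construct-pgii} that two vertices satisfy $\gamenode{s_1, \ldots, s_n, q, p'} \sim_p \gamenode{s'_1, \ldots, s'_n, q', p''}$ exactly when $p' = p''$ and $s_j = s_j'$ for all $1 \leq j \leq p$. Thus the relation $\sim_p$ requires agreement on the states $s_1, \ldots, s_p$ (together with the ``turn'' component being equal). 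If $p' < p$, then agreement on $s_1, \ldots, s_p$ trivially entails agreement on the shorter prefix $s_1, \ldots, s_{p'}$; combined with the identical constraint $p' = p''$ on the turn component, this yields $\sim_p \, \subseteq \, \sim_{p'}$. Hence the larger-indexed player observes at least as much as every smaller-indexed player, which is precisely the hierarchical condition. This step is short and essentially a monotonicity argument on prefix agreement, so I expect no difficulty here.

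\emph{Decidability.} With the hierarchical structure in hand, I would appeal to the classical results on games under hierarchical (or nested) information \cite{Reif84,peterson2002decision,FinkbeinerS05}, which establish that for a multiplayer game of this form the question of whether a designated coalition has a joint winning strategy is decidable. The remaining bookkeeping is to confirm that $\calG_{\calK, \varphi}$ is a finite-state game with an $\omega$-regular (parity) winning condition, both of which are immediate from \Cref{def:construct-pgii}: the vertex set is finite (products of the finite state set $S \uplus \{s_\mathit{init}\}$, the finite automaton states $Q_\psi$, and the finite player set), and the coloring $c$ induced by $c_\psi$ gives a parity objective. Since the coalition $\agents_\exists$ is a fixed subset of $\agents$, the hierarchical-information decision procedure applies directly to decide $\winss{\agents_\exists}{\calG_{\calK, \varphi}}$.

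The main obstacle, if any, is purely a matter of matching the precise technical hypotheses of the cited decidability theorems to our exact game model (e.g.\ whether the information order must be strict and total, and whether the winning coalition must be an upward- or downward-closed set in $\prec$). In our setting this is unproblematic because $\prec$ is the total order $1 \prec 2 \prec \cdots \prec n$ and the coalition $\agents_\exists$ can be an arbitrary subset; the standard formulation handles an arbitrary designated coalition once the chain of observation inclusions $\sim_n \subseteq \cdots \subseteq \sim_1$ is in place. I would therefore state the reduction to the cited results explicitly and note that no additional structural assumptions are needed.
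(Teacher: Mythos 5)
Your proposal is correct and follows essentially the same argument as the paper: order the players $1 \prec 2 \prec \cdots \prec n$ and observe that agreement on the first $p$ state components (plus the turn component) implies agreement on the first $p'$ components for $p' < p$, giving $\sim_p \,\subseteq\, \sim_{p'}$, with decidability then following from the cited results on hierarchical-information games. Your additional bookkeeping (finiteness of the vertex set, parity objective) is a harmless elaboration of what the paper leaves implicit.
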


\section{Completeness for $\exists^*\forall^*$}\label{sec:ea}

In our game-based view, we let players construct witness traces for existentially quantified traces. 
Compared to the HyperLTL semantics, this limits the power of existential quantification.
For example, in the semantics of a $\forall \pi_1\ldot \exists \pi_2 \ldot \psi$ formula, the choice for $\pi_2$ can be based on the \emph{entire} trace assigned to $\pi_1$. 
In the game-based view, $\pi_2$ is constructed step-wise by a player, so the decision in the $i$th round of the game can only depend on $\pi_1$'s prefix of length $i$. 
This leads to incompleteness, i.e., situations where a property holds, but the game is not won by $\agents_\exists$. 

\begin{example}[\!\!\cite{BeutnerF22}]\label{ex:incomplete}
	Consider the Kripke structure $\calK$ in \Cref{fig:example-unsound} and the HyperLTL formula $\varphi = \forall \pi_1\ldot \exists \pi_2\ldot  \ltlG\ltlF(a_{\pi_2} \leftrightarrow \ltlN a_{\pi_1})$ which requires that $\pi_2$ predicts the next value of $\pi_1$ infinitely often. Clearly, $\calK \models \varphi$, but $\neg \winss{\{2\}}{\calG_{\calK, \varphi}}$: In $\calG_{\calK, \varphi}$, player $2$ has to decide in each step whether AP $a$ should hold but does not know what player $1$ will do on trace $\pi_1$ in the future. 
	No matter what player $2$ does, player $1$ can thus always ensure that $a_{\pi_2} \not\leftrightarrow \ltlN a_{\pi_1}$.
\end{example}

Our game is thus incomplete; already on $\forall^*\exists^*$ properties where our game coincides with the full-information games from \cite{BeutnerF22}, cf.~\Cref{lem:same}.
While incomplete for $\forall^*\exists^*$, our game is, perhaps surprisingly, complete for $\exists^*\forall^*$ properties.

\begin{restatable}{theorem}{complete}
	Assume $\varphi$ is a $\exists^*\forall^*$ HyperLTL formula. 
	Then $\winss{\agents_\exists}{\calG_{\calK, \varphi}}$ if and only if $\calK \models \varphi$. 
\end{restatable}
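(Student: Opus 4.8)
The plan is to prove the two implications separately. The soundness direction, that $\winss{\agents_\exists}{\game{\calK}{\varphi}}$ implies $\calK \models \varphi$, is an immediate instance of \Cref{theo:soundness}, which was already established for \emph{arbitrary} quantifier prefixes. Hence all the work lies in the converse, and this is the only place the restriction to $\exists^*\forall^*$ is used. For the completeness direction, write $\varphi = \exists \pi_1 \ldots \exists \pi_k. \forall \pi_{k+1} \ldots \forall \pi_n \ldot \psi$, so that $\agents_\exists = \{1, \ldots, k\}$. Assuming $\calK \models \varphi$, unfolding the HyperLTL semantics yields witness traces $t_1, \ldots, t_k \in \traces(\calK)$ such that, for \emph{every} choice of $t_{k+1}, \ldots, t_n \in \traces(\calK)$, the assignment mapping each $\pi_i$ to $t_i$ satisfies $\psi$. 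For each $i \leq k$ I would fix a path $\tau_i \in \paths(\calK)$ with $\ell(\tau_i) = t_i$, together with a sequence of directions $d_i^{(0)}, d_i^{(1)}, \ldots \in \directions$ realizing it, i.e., $\tau_i(r+1) = \kappa(\tau_i(r), d_i^{(r)})$.

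The core idea is that, because the existential witnesses are chosen independently of the universal traces, each existential player can follow its predetermined path \emph{blindly}. Concretely, I would define the strategy $\sigma_i$ for player $i \leq k$ to output the direction $d_i^{(r)}$ on any history ending in a player-$i$ vertex, where the round index $r$ is read off from the length of the history (which determines the round, since players always act in the fixed cyclic order $1, \ldots, n$). Two observations make $\{\sigma_i\}_{i \leq k}$ a legal coalition strategy. First, each $\sigma_i$ depends only on the length of the play prefix; since indistinguishable plays always have equal length, $\sigma_i(\rho_1) = \sigma_i(\rho_2)$ holds trivially whenever $\rho_1 \sim_i \rho_2$, so the information constraint is respected. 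Second, the existential players need no mutual coordination and no observation of each other's states: each independently reproduces its own component $\tau_i$ of the witness tuple.

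It then remains to verify that $\{\sigma_i\}_{i \leq k}$ is winning. I would fix arbitrary strategies for the universal players $k+1, \ldots, n$ and consider the resulting play $\rho$. By the definition of $E$ and of the $\sigma_i$, the paths that $\rho$ induces for $\pi_1, \ldots, \pi_k$ are exactly $\tau_1, \ldots, \tau_k$, whereas the induced paths $\tau_{k+1}, \ldots, \tau_n$ for the universal copies are, whatever the opponents played, genuine paths of $\calK$, so $\ell(\tau_{k+1}), \ldots, \ell(\tau_n) \in \traces(\calK)$. The defining property of the witnesses now applies to these universal traces, giving that the assignment $[\pi_i \mapsto \ell(\tau_i)]_{i=1}^n$ satisfies $\psi$; hence the combined trace fed to $\calA_\psi$ lies in $\calL(\calA_\psi)$. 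Since the color of every vertex of $\rho$ is $c_\psi$ of its automaton component, and this component follows the unique run of $\calA_\psi$ on the combined trace, the minimal color seen infinitely often in $\rho$ is even, i.e., $\rho$ is even. This gives $\winss{\agents_\exists}{\game{\calK}{\varphi}}$ and completes the equivalence.

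The main obstacle to watch is the interface between the first-order HyperLTL witnesses and the step-wise, partial-information game: one must ensure that the time-indexed blind strategy is both \emph{admissible} under $\sim_i$ and \emph{sufficient} to win. Admissibility is the subtle point that genuinely exploits the $\exists^*\forall^*$ shape --- it is precisely because the witness traces do not depend on the universally quantified traces that a strategy ignoring all observed state (and relying only on the observable round counter) suffices; for $\forall^*\exists^*$ this fails, as \Cref{ex:incomplete} shows, since there the existential witness must react to the \emph{entire} universal trace, which no step-wise strategy can capture. I would also make explicit the routine correspondence (already used in the proof of \Cref{theo:soundness}) between the infinitely-often-minimal color of a play and the parity acceptance of $\calA_\psi$'s run on the induced combined trace, so that the final ``play is even'' step is fully justified.
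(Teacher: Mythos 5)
Your proposal is correct and follows essentially the same route as the paper's proof: extract witness traces from $\calK \models \varphi$, fix realizing paths, and let each existential player blindly replay its predetermined path, which is trivially consistent with $\sim_i$ and wins against any universal behavior. Your write-up is in fact more explicit than the paper's (which states the admissibility and winning conditions as ``easy to see''), particularly in justifying legality of the length-indexed blind strategy and the parity/acceptance correspondence.
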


Note that we can check any $\forall^*\exists^*$ property by checking the negated property (which is $\exists^*\forall^*$). 
Our game-based approach, therefore, constitutes a sound-and-complete model-checking technique for all HyperLTL formulas with \emph{at most one} quantifier alternation. 
The cost of this completeness manifests itself in the additional complexity; our game-based approach for $\forall^*\exists^*$ properties (which coincides with \cite{BeutnerF22}) is incomplete but results in a standard game under full information.
If we, instead, check the negated $\exists^*\forall^*$ formula, our game is complete, but the game uses partial information, making automated game-solving more challenging. 

\section{Prophecy Variables}\label{sec:comp}

For properties beyond $\exists^*\forall^*$, we need additional tools to counteract the incompleteness.
In this section, we study the use of prophecy variables in our game-based framework; a technique originally studied in \cite{CoenenFST19,BeutnerF22} for the verification of $\forall^*\exists^*$ hyperproperties (building on the seminal work by Abadi and Lamport \cite{AbadiL88}).
At a high level, a prophecy provides limited information about the future behavior of other players.
For example, in the setting of a $\forall \pi_1\ldot \exists \pi_2\ldot \psi$ formula, the player controlling $\pi_2$ only observes the past behavior of the player controlling $\pi_1$.
For such a formula, a prophecy is an LTL formula $\xi$ over trace variable $\pi_1$ (cf.~\cite{BeutnerF22}). 
In each step of the game, the player controlling $\pi_2$ can then query an oracle that tells them whether or not the future behavior of $\pi_1$ will satisfy $\xi$, and base its decision on the additional information provided by the oracle. 

\begin{example}
	In \Cref{ex:incomplete}, the player controlling $\pi_2$ does not have a winning strategy as it does not know if $a$ will hold on $\pi_1$ in the \emph{next} step.
	In this example, it suffices for the player controlling $\pi_2$ to have access to an oracle that, in each step, predicts whether prophecy $\xi := \ltlN a_{\pi_1}$ holds (cf.~\cite{BeutnerF22}).
	If $\xi$ holds (so $\ltlN a_{\pi_1}$), the player can move the $\pi_2$-copy to state $s_B$ (where $a$ holds, cf.~\Cref{fig:example-unsound}), thus ensuring that $a_{\pi_2} \leftrightarrow \ltlN a_{\pi_1}$.
\end{example}

Prophecies essentially combat the problem of having \emph{too little} information about the \emph{future}, which already leads to \emph{incompleteness} in the $\forall^*\exists^*$ setting. 
The main contribution of this paper is that we can use partial information to avoid players having \emph{too much} information about the behavior of (certain) other players, which would lead to \emph{unsoundness}.
In this section, we extend the prophecy framework of \cite{BeutnerF22} from $\forall^*\exists^*$ to arbitrary quantifier prefixes.
Our prophecy construction combines two ideas: our observation model ensures that players do not observe too much (to ensure soundness), while prophecies provide missing information about future events.

\subsection{Prophecies and Partial Information}\label{sub:prophecies}

To keep notation simple, we assume for the remainder of this section -- and w.l.o.g. -- that the hyperproperty in question is of the form $\varphi = \forall \pi_1\ldot \exists \pi_2\ldot \forall \pi_3\ldots \forall \pi_{2n-1} \ldot \exists \pi_{2n}\ldot\psi$, i.e., strictly alternates between universal (at odd indices) and existential (at even indices) quantification.

\begin{example}\label{ex:runningExample}
	Consider the formula 
	\begin{align*}
		&\varphi := \forall \pi_1\ldot \exists \pi_2 \ldot \forall \pi_3 \ldot \exists \pi_4\ldot \ltlG \ltlF (a_{\pi_2} \leftrightarrow \ltlN a_{\pi_1}) \land \\
		&\quad\quad\quad\ltlN\ltlN \big( a_{\pi_4} \leftrightarrow \ltlG(a_{\pi_1} \leftrightarrow a_{\pi_2} \land a_{\pi_2} \leftrightarrow a_{\pi_3}) \big).
	\end{align*}
	It requires $\pi_2$ to globally predict the next step of $\pi_1$, and $\pi_4$ should, in the second step, predict whether $\pi_1$, $\pi_2$, and $\pi_3$ agree on $a$.
	The KS $\calK$ in \Cref{fig:example-unsound} satisfies $\varphi$, yet coalition $\{2,4\}$ loses $\calG_{\calK, \varphi}$.
	To win this game, the player controlling $\pi_2$ needs (in every step) information about the future of $\pi_1$, and the player controlling $\pi_4$ needs (in the second step) information about the joint future of $\pi_1, \pi_2$, and $\pi_3$.
\end{example}

\subsection{Prophecies and Prophecy Variables}

To ensure soundness, we need to ensure that each player $i$ is (via the prophecies) only given information over traces quantified \emph{before} $\pi_i$.
For each existentially quantified trace $\pi_{2i}$, we therefore track a separate set of prophecies:

\begin{definition}\label{def:family}
	A \emph{prophecy family} is a collection $\vec{\Xi} = \{\Xi_{2i-1}\}_{i=1}^n$, where $\Xi_{2i-1}$ is a finite set of LTL formulas over trace variables from $\{\pi_1, \ldots, \pi_{2i-1}\}$.
\end{definition}

Intuitively, $\Xi_{2i-1}$ contains all prophecies that provide information to the player controlling trace $\pi_{2i}$.
Consequently, the formulas in $\Xi_{2i-1}$ only reason about traces $\{\pi_1, \ldots, \pi_{2i-1}\}$, which are exactly the traces player $i$ can observe in the game. 

\begin{example}\label{ex:runningExample2}
	Consider the property in \Cref{ex:runningExample}.
	We can construct the prophecy family $\{\Xi_1, \Xi_3  \}$, where $\Xi_1 := \{ \ltlN a_{\pi_1}  \}$, and $\Xi_3 := \{ \ltlG(a_{\pi_1} \leftrightarrow a_{\pi_2} \land a_{\pi_2} \leftrightarrow a_{\pi_3})  \}$. 
	These prophecies provide exactly the information needed by players $2$ and $4$.
	That is, if players $2$ and $4$ could, \emph{in each step of the game}, determine if the prophecies in $\Xi_1$ and $\Xi_3$ hold, respectively, they can construct appropriate witness traces and win $\calG_{\calK, \varphi}$. 
\end{example}

Now assume we have a fixed family of LTL formulas $\vec{\Xi} = \{\Xi_{2i-1}\}_{i=1}^n$. 
The intuition behind the prophecies is that each player $\pi_{2i}$ is provided with an oracle that -- in each step of the game-- tells her which of the formulas in $\Xi_{2i-1}$ hold. 
Following \cite{BeutnerF22}, we will use \emph{prophecy variables} to formalize this oracle. 
A prophecy variable is essentially an AP that we add to the system, and we ensure that the value of this variable (AP) reflects the truth value of the prophecy formula.
For this, we assume that $P$ is a set of prophecy variables for $\vec{\Xi}$, i.e., for each $1 \leq i \leq n$ and $\xi \in \Xi_{2i-1}$, there exists a corresponding prophecy variable $p^\xi \in P$.
A player can thus query an oracle on whether prophecy $\xi$ currently holds, by simply reading the prophecy variable (AP) $p^\xi$.
The key idea now is that we can attach these prophecy variables to \emph{universally} quantified traces \cite{BeutnerF22}.
That is, we let the opposing players $\agents \setminus \agents_\exists$ (controlling the universally quantified traces) determine the truth value of the prophecy variables, and then ensure, within the HyperLTL formula, that the prophecy variables are set correctly: 
That is, we modify the HyperLTL formula to ensure that $\xi \in \Xi_{2i-1}$ holds iff the prophecy variable $p^\xi$ is set to true on trace $\pi_{2i-1}$.
This ensures that the player $2i$ controlling $\pi_{2i}$ can query the prophecies in $\Xi_{2i-1}$ by looking at the current state of $\pi_{2i-1}$ (and the Boolean value of the prophecy variables, i.e., AP,  in that state), but the players controlling $\pi_1, \ldots, \pi_{2i-2}$ cannot.

As a first step, we add the variables in $P$ to the system, which can  be set arbitrarily in each step:

\begin{definition}[$\calK^{P}$]\label{def:addPToK}
	Given a KS $\calK = (S, s_\mathit{init}, \directions, \kappa, \ell)$ over $\ap$ and a disjoint set of prophecy variables $P$ ($\ap \cap P = \emptyset$), define the modified KS $\calK^{P} := (S \times 2^P, s_\mathit{init}, \directions \times 2^P, \kappa^P, \ell^P)$ over $\ap \uplus P$ where for each direction $(d, A) \in \directions \times 2^P$, we define $\kappa^P$ by
	\begin{align*}
		\kappa^P(s_\mathit{init}, (d, A)) &:= (\kappa(s_\mathit{init}, d), A)\\
		\kappa^P((s, A'), (d, A)) &:= (\kappa(s, d), A)
	\end{align*}
	and $\ell^P(s_\mathit{init}) = \ell(s_\mathit{init})$ and $\ell^P(s, A) := \ell(s) \cup A$.
\end{definition}
Intuitively, $\calK^{P}$ generates all traces of $\calK$ extended with all possible evaluations on the prophecy variables.

We then modify the body of the HyperLTL formula such that the original property is only required to hold if all prophecies are set correctly, i.e., a prophecy variable in $p^\xi$ is set to true iff the future traces satisfy $\xi$. 

\begin{definition}[$\varphi^{{P}, \vec{\Xi}}$]\label{def:addProphToForm}
	Given a prophecy family $\vec{\Xi} = \{\Xi_{2i-1}\}_{i=1}^n$ and a corresponding set of prophecy variables $P$, define the modified HyperLTL formula $\varphi^{{P}, \vec{\Xi}}$ by 
	\begin{align*}
		\varphi^{{P}, \vec{\Xi}} := \forall \pi_1\ldot &\exists \pi_2\ldot \forall \pi_3\ldots \forall \pi_{2n-1} \ldot \exists \pi_{2n}\ldot \\
		&\bigg(\ltlN\ltlG \bigwedge_{i=1}^n \Big(\bigwedge_{\xi \in \Xi_{2i-1}} \big((p^\xi)_{\pi_{2i-1}} \leftrightarrow \xi\big)\Big)\bigg)\to \psi.
	\end{align*}
\end{definition}

That is, we require $\psi$ (the original LTL body of $\varphi$) to hold, if in every step, for every $1 \leq i \leq n$, and every prophecy formula $\xi \in \Xi_{2i-1}$, $\xi$ holds iff the corresponding prophecy variable $p^\xi$ holds on trace $\pi_{2i-1}$.
Note how $\varphi^{P, \vec{\Xi}}$ connects the prophecy variables with the underlying prophecy.
If some prophecy variable $p^\xi$ for $\xi \in \Xi_{2i-1}$ is set on (the universally quantified) trace $\pi_{2i-1}$, the player $2i \in \agents_\exists$ controlling the existentially quantified trace $\pi_{2i}$ can assume that $\xi$ holds, and vice versa.
If this is not the case, i.e., player $2i-1$ sets $p^\xi$ incorrectly, the premise of $\varphi^{P, \vec{\Xi}}$ is violated, so the LTL body of $\varphi^{P, \vec{\Xi}}$ is vacuously true, and $\agents_\exists$ wins the game in \Cref{def:construct-pgii}. 
Note that trace $\pi_{2i-1}$ is universally quantified, so we consider all possible valuations of the prophecy variables, including the unique valuation where the prophecy variables are set correctly in each step, i.e., $\ltlN \ltlG \bigwedge_{\xi \in \Xi_{2i-1}} \big((p^\xi)_{\pi_{2i-1}} \leftrightarrow \xi\big)$.
Observe that we only require the prophecies to be set correctly after the first step (using a single $\ltlN$), as the unique initial state $s_\mathit{init}$ of a KS does not record prophecy variables. 

\subsection{Prophecies and Games}
We can use the information provided via prophecies in our game-based approach.
Instead of checking if the players in $\agents_\exists$ win $\game{\calK}{\varphi}$, we can then check if they win $\calG_{\calK^{P}, \varphi^{P, \vec{\Xi}}}$. 
In the latter game, the additional premise in $\varphi^{P, \vec{\Xi}}$ ensures that the players can use prophecies to peek at future moves of the universal traces.

\begin{example}\label{ex:runningExample3}
	We consider the example from \Cref{ex:runningExample,ex:runningExample2}.
	We already argued that $\agents_\exists = \{2, 4\}$ loses $\game{\calK}{\varphi}$.
	Now consider the prophecy family $\vec{\Xi}$ from \Cref{ex:runningExample2}, with corresponding prophecy variables $P := \{ \{p\}, \{pp\} \}$.
	Using \Cref{def:addProphToForm}, we construct
	\begin{align*}
		&\varphi^{P, \vec{\Xi}} = \forall \pi_1\ldot \exists \pi_2 \ldot \forall \pi_3 \ldot \exists \pi_4\ldot 
		\bigg[ \ltlN\ltlG \big(p_{\pi_1} \leftrightarrow \ltlN a_{\pi_1}\big) \land \\
		&\quad\quad\quad\ltlN\ltlG \big(pp_{\pi_3} \leftrightarrow \ltlG(a_{\pi_1} \leftrightarrow a_{\pi_2} \land a_{\pi_2} \leftrightarrow a_{\pi_3})\big) \bigg] \to \psi,
	\end{align*}
	where $\psi$ is the original LTL body (cf.~\Cref{ex:runningExample}).
	It is easy to see that $\{2, 4\}$ wins $\calG_{\calK^{P}, \varphi^{P, \vec{\Xi}}}$:
	the prophecy variable $p$ on $\pi_1$ hints at the next move of $\pi_1$. 
	If, for example, player $1$ sets $p$ to true, player $2$ can assume that $\ltlN a_{\pi_1}$ holds (if it does not, the premise of $\varphi^{P, \vec{\Xi}}$'s LTL body is violated, and so the play is trivially won by $\{2, 4\}$).
	Likewise, the prophecy variable $pp$ provides the necessary information for player $4$. 
\end{example}

\subsection{Soundness}

The key result we are left to prove is that the addition of prophecies does not change the HyperLTL semantics, even though the LTL body of $\varphi^{P, \vec{\Xi}}$ is weaker than the body of $\varphi$.

\begin{restatable}{theorem}{proph}\label{theo:introProph}
	Assume a prophecy family $\vec{\Xi} = \{\Xi_{2i-1}\}_{i=1}^n$ and a corresponding set of prophecy variables $P$.
	Then $\calK \models \varphi$ if and only if $\calK^{P} \models \varphi^{P, \vec{\Xi}}$.
\end{restatable}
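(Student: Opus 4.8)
The plan is to prove both directions by relating traces of $\calK$ with traces of $\calK^P$ and exploiting the fact that the added premise in $\varphi^{P,\vec{\Xi}}$ is, on the universally quantified traces, satisfiable by a \emph{unique correct} prophecy valuation. The central observation is the projection/extension correspondence induced by \Cref{def:addPToK}: every trace $t' \in \traces(\calK^P)$ projects (by forgetting the $P$-component) to a trace $t \in \traces(\calK)$, and conversely every trace $t \in \traces(\calK)$ lifts to exactly one trace $t' \in \traces(\calK^P)$ for each choice of prophecy valuation in $(2^P)^\omega$ (from the first step onward). I would first record this bijection between $\traces(\calK^P)$ and $\traces(\calK) \times (2^P)^{\omega}$ as a lemma, together with the fact that projecting a trace assignment preserves satisfaction of $\psi$, since $\psi$ never mentions prophecy variables.

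For the forward direction ($\calK \models \varphi \Rightarrow \calK^P \models \varphi^{P,\vec{\Xi}}$), I would play the trace quantifiers in lockstep. Given an arbitrary universal trace $t'_1 \in \traces(\calK^P)$ for $\pi_1$, project it to $t_1 \in \traces(\calK)$, use the witness for $\pi_2$ guaranteed by $\calK \models \varphi$, and lift it back to $\calK^P$ (the $P$-component of an existential trace is irrelevant, so any valuation works). Proceeding through all $2n$ quantifiers this way, I obtain a trace assignment $\Pi'$ on $\calK^P$ whose projection $\Pi$ satisfies the original body $\psi$. Since the implication in $\varphi^{P,\vec{\Xi}}$ is weaker than $\psi$, $\Pi'$ satisfies the body of $\varphi^{P,\vec{\Xi}}$ regardless of whether its premise holds. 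The key point is that $\psi$-satisfaction transfers between $\Pi$ and $\Pi'$ because the prophecy variables do not occur in $\psi$.

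For the backward direction ($\calK^P \models \varphi^{P,\vec{\Xi}} \Rightarrow \calK \models \varphi$), the argument runs the other way and here lies the main obstacle. Given an arbitrary $t_1 \in \traces(\calK)$ for the universal $\pi_1$, I must lift it to $\calK^P$ \emph{with the prophecy variables in $P$ set correctly}, i.e., to the trace $t'_1$ whose $p^\xi$-component at each step $i \geq 1$ records exactly whether the eventual future trace assignment will satisfy $\xi$. The difficulty is that, for $\xi \in \Xi_{2i-1}$ referring to $\pi_1,\ldots,\pi_{2i-1}$, the correct value of $p^\xi$ on $\pi_{2i-1}$ depends on traces not yet chosen at the time $\pi_{2i-1}$ is fixed; I must therefore interleave the lifting of each universal trace with the (existential) choices made for the preceding traces, defining the correct prophecy valuation only after the relevant traces are all determined. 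Concretely, I would construct the full trace assignment $\Pi$ on $\calK$ quantifier by quantifier, and \emph{in parallel} build $\Pi'$ on $\calK^P$ by lifting each universal $\pi_{2i-1}$ using the valuation $A^\xi(j) := \mathit{true} \iff \Pi, j+1 \models \xi$ (matching the single leading $\ltlN$, and leaving $s_\mathit{init}$ untouched as the definition dictates). Once soundness of this construction is established, the premise of $\varphi^{P,\vec{\Xi}}$ holds by design, so $\calK^P \models \varphi^{P,\vec{\Xi}}$ forces the body $\psi$ to hold on $\Pi'$, and projecting back yields $\Pi \models \psi$, giving $\calK \models \varphi$. I expect the delicate bookkeeping in this simultaneous induction over the alternating quantifier prefix—ensuring that each prophecy valuation is well-defined using only future information and that the lifting stays inside $\traces(\calK^P)$—to be where most of the care is required.
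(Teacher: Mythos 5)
Your proposal is correct and takes essentially the same route as the paper's proof: the paper phrases both directions via Skolem functions (forward: project universals, reuse the original witnesses, extend existentials with arbitrary prophecy values, and use that $\psi$ implies the weakened body; backward: lift each universal trace with the \emph{correctly} set prophecy variables and project the $\calK^P$-witnesses back down), which is exactly your lockstep/parallel construction written in functional form, including the same bookkeeping about the leading $\ltlN$ and $s_\mathit{init}$ carrying no prophecy variables. The one difficulty you highlight—that the correct value of $p^\xi$ on $\pi_{2i-1}$ might depend on traces not yet chosen—is in fact ruled out by the definition of a prophecy family ($\Xi_{2i-1}$ only mentions $\pi_1,\ldots,\pi_{2i-1}$, all determined at lifting time), and your interleaved construction resolves it exactly as the paper does.
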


\Cref{theo:introProph} allows us to soundly combine prophecies with the game-based approach:
If $\winss{\agents_\exists}{\calG_{\calK^{P}, \varphi^{P, \vec{\Xi}}}}$, then, by \Cref{theo:soundness}, we have $\calK^{P} \models \varphi^{P, \vec{\Xi}}$, so, by \Cref{theo:introProph}, we have $\calK \models \varphi$.
Prophecy variables thus constitute a tool that can strengthen game-based verification in the presence of arbitrary quantifier alternations.
This is particularly relevant when using our approach as an interactive proof technique.
Once a suitable set of prophecies is found (i.e., a family $\vec{\Xi}$ s.t.~$\agents_\exists$ wins $\calG_{\calK^{P}, \varphi^{P, \vec{\Xi}}}$), the prophecies, together with the winning strategies for $\agents_\exists$, are an easy-to-check certificate of satisfaction.

\section{Related Work}\label{sec:relatedWork}

\paragraph*{Logics for Hyperproperties}

Most logics for expressing temporal hyperproperties use HyperLTL-style quantification over execution traces \cite{CoenenFHH19,GutsfeldMO20,Rabe16,NguyenKJDJ17}.
In such logics, quantifier alternations are frequently used to, e.g., reason about non-determinism in the system.
Our paper proposes a principled approach to deal with quantifier alternations that can be easily extended to other logics that feature HyperLTL-style quantification over system paths/traces.

\paragraph*{HyperLTL Verification}

Finite-state model-checking of HyperLTL is decidable \cite{ClarksonFKMRS14}, and complete algorithms rely on expensive automata complementation or language inclusion checks \cite{FinkbeinerRS15,BeutnerF23}.
To approximate this expensive problem, Hsu et al.~\cite{HsuSB21,HsuSSB23} propose a bounded model-checking approach for HyperLTL by unrolling the system and property into a QBF formula. 
The other prominent approximation for HyperLTL is the game-based approach \cite{CoenenFST19,BeutnerF22}, which forms the foundation of the present paper. 
Both approximations are orthogonal to each other. 
In the game-based approach, we use strategies to resolve existential quantification and can thus reason about temporal behavior along \emph{infinite} paths.
In contrast, the QBF-based encoding features the same first-order semantics used in HyperLTL, but bounds the length of the traces, limiting the approach to properties that can be verified or refuted within a bounded timeframe.

\paragraph*{Advantages of Game-based Verification}
The game-based approach has multiple advantages over automata-complementation-based methods. 
Firstly, it allows verification in settings where complementation-based approaches fail. 
For example, the game-based approach can be used to verify \emph{infinite-state} systems by constructing abstract games \cite{BeutnerF22b,ItzhakySV24}, or utilizing infinite-state game solvers \cite{battigalli2003rationalizability,AlfaroHM01,HeimD24,FaellaP23,BaierCFFJS21}.
Secondly, the game-based approach allows for \emph{interactive proofs}, i.e., the user can manually construct a proof by incrementally defining a winning strategy \cite{CorrensonF25}.
This facilitates proofs in situations where automated methods do not scale.
And lastly, the game-based approach naturally yields \emph{certificates} of satisfaction, i.e., the computed winning strategy (combined with a set of prophecies, if needed) can easily be checked by independent strategy checkers \cite{BeutnerFG24}. 

In this paper, we propose an extension of the game-based approach to arbitrary quantifier structures, based on the key conceptual contribution of leveraging \emph{partial information}, extending our earlier extended abstract \cite{BeutnerF25a}.
Based on this key conceptual idea, we provide a sound framework that allows us to utilize the benefits of game-based verification for arbitrary quantifier structures.
While solving games under partial information is expensive, our paper also allows for cheaper approaches that build upon the game-based interpretation.
For example, we can attempt to find \emph{positional} strategies for $\agents_\exists$.
Finding positional strategies is much cheaper than finding strategies under perfect recall (i.e., unbounded memory), and, if winning positional strategies for $\agents_\exists$ are found, our results allow us to soundly conclude that $\calK \models \varphi$.  
This creates a spectrum of techniques with varying complexity and expressiveness (i.e., in some instances, positional strategies suffice, in others, we might need bounded memory or even perfect recall).
In the case of $\exists^*\forall^*$ properties, our definition yields a game where the player acts without any information, a setting explored extensively within the planning community (cf.~\emph{conformant planning}) \cite{GoldmanB96,BeutnerF25c}.

\paragraph*{Solving Games Under Incomplete Information}

We believe that the primary use case of our approach lies in its ability to construct sound-by-design interactive proofs and certificates.
Nevertheless, if paired with a solver for games under partial information, our approach could underpin a fully-automated verification pipeline.
The study of (multiplayer) games under incomplete information has a long tradition, mostly relying on using a powerset construction or lattice framework to track belief states \cite{Reif84,peterson2002decision,ChatterjeeD10,RaskinCDH07,BerwangerCWDH10,WulfDR06}.
Strategy-based logic can explicitly reason about the strategic abilities of players, and extensions to incomplete information exist \cite{BerthonMMRV21,BeutnerF24a,BerthonMM17,PileckiBJ14,BeutnerF25b}. 
Many of the frameworks used to study incomplete information can be used in our setting:
for example, partially observable non-deterministic (POND) planning reasons about an agent that acts in a non-deterministic environment, essentially defining a game under partial information \cite{BrafmanSZ13}, which we can use for hyperproperty verification \cite{BeutnerF24b,BeutnerF25c}.
Likewise, multi-agent planning \cite{GmytrasiewiczD05,GaleslootSJ024}, partially observable Markov decision processes with multiple agents \cite{ZhangL11,AmatoO15}, or multi-agent reinforcement learning \cite{BusoniuBS08} study strategy synthesis in partially observable domains.

\section{Conclusion and Future Work}

In this work, we have proposed a novel hyperproperty verification method using games. 
In contrast to previous game-based approaches, our method is not limited to $\forall^*\exists^*$ properties but soundly approximates \emph{arbitrary} quantifier structures. 
Moreover, we designed a prophecy mechanism that aligns with the partial information of the player.
Our work creates numerous avenues for future work, both in theory and practice.
In theory, it is interesting to study the expressive power of our game-based approximation when using prophecies. 
Prophecies are a complete proof technique in the setting of $\forall^*\exists^*$ properties, i.e., whenever a property holds, there exists some finite set of $\omega$-regular (not necessarily LTL-definable) prophecies such that the game is won by the verifier \cite{BeutnerF22}. 
The high-level idea of this construction is to construct prophecies that directly determine successor states, i.e., prophecies $\xi_{s, s'}$ that hold iff, when in state $s$, moving to $s'$ is the ``optimal'' move for the verifier (see \cite{BeutnerF22} for details). 
We conjecture that completeness also holds in the presence of arbitrary quantifier alternations:
For a property $\quant_1 \pi_1 \ldots \quant_{i-1} \pi_{i-1} \exists \pi_i. \quant_{i+1} \pi_{i+1} \ldots \quant_n \pi_n\ldot \psi$, we can design prophecies that precisely define the optimal behavior for player $i$ (for a fixed system $\calK$, $\quant_{i+1} \pi_{i+1} \ldots \quant_n \pi_n\ldot \psi$ is just an $\omega$-regular property over $\pi_1, \ldots, \pi_{i}$).
In practice, we can utilize our games as an interactive proof technique, e.g., using the coinductive framework of \cite{CorrensonF25}.

\section*{Acknowledgments}

This work was supported by the European Research Council (ERC) Grant HYPER (101055412), and by the German Research Foundation (DFG) as part of TRR 248 (389792660).

\bibliographystyle{IEEEtran}
\bibliography{references}

\iffullversion

\appendices

\section{Soundness}\label{app:sound}

\sound*
\begin{proof}
	As all traces quantified in $\varphi$ originate from some path in $\calK$, we can reason about paths in $\paths(\calK)$ instead of traces in $\traces(\calK)$.
	To show that $\calK \models \varphi$, we now construct a \emph{Skolem function} $f_p : \paths(\calK)^{p-1} \to \paths(\calK)$ for each $p \in \agents_\exists$; a standard proof technique for first-order logic \cite{shoenfield2018mathematical}.
	Intuitively, $f_p$ provides a concrete witness path for $\pi_p$ when given the $p-1$ paths used for $\pi_1, \ldots, \pi_{p-1}$ \cite{shoenfield2018mathematical}.
	We want to find a family of Skolem functions $\{f_p\}_{p \in \agents_\exists}$ such that for every path combination $(\tau_1, \ldots, \tau_n) \in \paths(\calK)^n$ where $\tau_p = f_p(\tau_1, \ldots, \tau_{p-1})$ whenever $p \in \agents_\exists$, $(\tau_1, \ldots, \tau_n)$ satisfies $\psi$ (the LTL body of $\varphi$).
	That is, for all existentially quantified paths, we use the Skolem function but choose arbitrary paths for all universally quantified ones.
	If we can find such a family, we can conclude that $\calK \models \varphi$ \cite{shoenfield2018mathematical}.
	
	Let $\{\sigma_p\}_{p \in \agents_\exists}$ be a joint strategy that wins $\calG_{\calK, \varphi}$ for coalition $\agents_\exists$, which exists as we assume  $\winss{\agents_\exists}{\calG_{\calK, \varphi}}$.
	Each $\sigma_p$ is a function $V^* \to \directions$ such that $\sigma_p(\rho_1) = \sigma_p(\rho_2)$ whenever $\rho_1 \sim_p \rho_2$.
	The key proof idea is now to observe that -- due to the observation relation -- we can view $\sigma_p$ as a function that reads the quotient of $V$ by $\sim_p$, i.e., a function $\sigma_p : (V / \!\!\! \sim_p)^* \to \directions$.
	As player $p$ can only observe the first $p$ states in each vertex (by definition of $\sim_p$), we can thus view $\sigma_p$ as a function $\sigma_p : (\Delta_p )^* \to \directions$, where $\Delta_p = (V /  \!\!\! \sim_p) := \{\langle s_1, \ldots, s_p \rangle \mid s_1, \ldots, s_p \in S \}$.
	We can then use strategy $\sigma_p$ to construct Skolem function $f_p : \paths(\calK)^{p-1} \to \paths(\calK)$ by querying it on \emph{prefixes}. 
	The Skolem function $f_p$ will see $p-1$ infinite paths in $\calK$.
	In contrast, $\sigma_p$ observes the states of the first $p-1$ paths, i.e., has only seen a prefix of the final paths.
	We will use $\sigma_p$ to construct $f_p$ by iteratively querying it on those prefixes. 
	For this, let $\tau_1, \ldots, \tau_{p-1} \in \paths(\calK)$ and we want to define $f_p(\tau_1, \ldots, \tau_{p-1})$.
	We use $\tau_1, \ldots, \tau_{p-1} \in \paths(\calK)$ to construct sequences $T_0, T_1, \ldots \in (\Delta_p)^*$, where $T_i$ is the prefix of the game at which strategy $\sigma_p$ will select the $i$th direction, which we can easily construct inductively from prefixes $\tau_1[0,i], \ldots, \tau_{p-1}[0, i]$ and the first $i-1$ directions chosen by $\sigma_p$.
	Now define $f_p(\tau_1, \ldots, \tau_{p-i})$ as the unique path $\tau$ such that $\tau(0) = s_\mathit{init}$, and $\tau(i+1) = \kappa(\tau(i), \sigma_p(T_i))$, i.e., we use the directions chosen by $\sigma_p$ on each of the prefixes $T_0, T_1, \ldots$ to construct $\tau$.
	That is, even though we already know the entire paths $\tau_1, \ldots, \tau_{p-1}$, we act as if we only knew the prefix up to the current round of the game.
	Note how the incomplete information is necessary to enable this construction; The Skolem function $f_p$ is only given the $p-1$ paths quantified before $\pi_p$, so we do \emph{not} know the paths for $\pi_{p+1}, \ldots, \pi_n$. 
	As we can view $\sigma_p$ as a function of $\Delta_p$, the $p-1$ paths for $\pi_1, \ldots, \pi_{p-1}$ suffice for the simulation.

	It remains to argue that the family of Skolem functions $\{f_p\}_{p \in \agents_\exists}$ witnesses $\calK \models \varphi$. 
	For this, let $(\tau_1, \ldots, \tau_n)$ be a path combination under $\{f_p\}_{p \in \agents_\exists}$.
	We need to show that $(\tau_1, \ldots, \tau_n)$ satisfies $\psi$.
	Here, the key observation is that all path combinations permitted by $\{f_p\}_{p \in \agents_\exists}$ directly correspond to plays in $\calG_{\calK, \varphi}$ under strategies $\{\sigma_p\}_{p \in \agents_\exists}$ (as we have used $\sigma_p$'s direction to construct existentially quantified paths).
	In each infinite play in $\calG_{\calK, \varphi}$, the DPA $\calA_\psi$ tracks if the simulated paths satisfy $\psi$, and -- as $\{\sigma_p\}_{p \in \agents_\exists}$ wins the game -- this automaton accepts every play under $\{\sigma_p\}_{p \in \agents_\exists}$.
	As this holds for any combination $(\tau_1, \ldots, \tau_n)$ under $\{f_p\}_{p \in \agents_\exists}$, we can conclude that $\calK \models \varphi$ as required. 
\end{proof}

\hierarchical*
\begin{proof}
	The player in $\calG_{\calK, \varphi}$ are $\{1, \ldots, n\}$ and we order them $1 \prec 2 \prec \cdots \prec n$.
	It is easy to see that player $p$ can observe more than any player $p' < p$. 
	If two vertices $\gamenode{s_1, \ldots, s_n, q, x}, \gamenode{s_1', \ldots, s_n', q', x'}$ are indistinguishable for $p$ (i.e., $\gamenode{s_1, \ldots, s_n, q, x} \sim_p \gamenode{s_1', \ldots, s_n', q', x'}$) , we have $x = x'$ and $\forall 1 \leq i \leq p\ldot s_i = s_i'$. 
	Clearly, this implies that $\forall 1 \leq i \leq p'\ldot s_i = s_i'$ (as $p' < p$), so $\gamenode{s_1, \ldots, s_n, q, x} \sim_{p'} \gamenode{s_1', \ldots, s_n', q', x'}$.
\end{proof}

\section{Completeness for $\exists^*\forall^*$}

\complete*
\begin{proof}
	The first direction follows from \Cref{theo:soundness}. 
	For the other direction assume $\varphi = \exists \pi_1 \ldots \pi_n\ldot \forall \pi_{n+1} \ldots \pi_{n+m}\ldot 
	\psi$ is the $\exists^*\forall^*$ formula and $\calK \models \varphi$. 
	As $\calK \models \varphi$, we get witness traces $t_1, \ldots, t_n \in \traces(\calK)$ such that $[\pi_1 \mapsto t_1, \ldots, \pi_n \mapsto t_n] \models \forall \pi_{n+1} \ldots \pi_{n+m}\ldot \psi$. 
	Let $\tau_1, \ldots, \tau_n \in \paths(\calK)$ be paths that generate $t_1, \ldots, t_n$, respectively.  
	Now, for each $1 \leq p \leq n$, let $\sigma_p$ be the strategy that simply generates path $\tau_p$, i.e., in the $j$th game round, it picks a direction to move the $\pi_p$-copy to state $\tau_p(j)$. 
	It is easy to see that $\{\sigma_i\}_{i=1}^n$ is a winning strategy for $\agents_\exists = \{1, \ldots, n\}$; the strategy generates traces $t_1, \ldots, t_n$, so, no matter how the opponents play, the resulting paths satisfy $\psi$. So $\winss{\agents_\exists}{\calG_{\calK, \varphi}}$ as required.
\end{proof}

\section{Prophecies}

\proph*
\begin{proof}
	For the first direction, assume $\calK \models \varphi$. Recall that $\varphi = \forall \pi_1\ldot \exists \pi_2\ldot \forall \pi_3\ldots \forall \pi_{2n-1} \ldot \exists \pi_{2n}\ldot\psi$
	For each $1 \leq i \leq n$, let $f_{2i}$ be a Skolem function for the $i$th existential quantifier, i.e., $f_{2i} : \traces(\calK)^{2i-1} \to \traces(\calK)$ maps any combination of $2i-1$ traces for the previous quantified trace variables $\pi_1, \ldots, \pi_{2i-1}$ to a valid choice for $\pi_{2i}$.
	We want to show that $\calK^{P} \models \varphi^{P, \vec{\Xi}}$. 
	For this, we construct Skolem functions $\tilde{f}_{2i} : \traces(\calK^{P})^{2i-1} \to \traces(\calK^{P})$ for $1 \leq i \leq n$ for each existential quantifier in $\varphi^{P, \vec{\Xi}}$.
	To define, $\tilde{f}_{2i}$, let $(t_1, \ldots, t_{2i-1}) \in \traces(\calK^{P})^{2i-1}$.
	Let $(t_1', \ldots, t_{2i-1}') \in \traces(\calK)^{2i-1}$ be the corresponding traces in $\calK$ obtained by projecting on the original APs, i.e., removing all prophecy variables added in \Cref{def:addPToK}.
	Let $t := f_{2i}(t_1', \ldots, t_{2i-1}') $ be the witness assigned to this combination by the Skolem function witnessing  $\calK \models \varphi$. 
	We define $\tilde{f}_{2i}(t_1, \ldots, t_{2i-1})$ as an arbitrary extension of $t$ with prophecy variables.
	No matter what traces we choose for universally quantified traces, the Skolem functions $\tilde{f}_2, \ldots, \tilde{f}_{2n}$ yield witness traces such that the $2n$ combined traces $t_1, \ldots, t_{2n}$ satisfy $\psi$ (the LTL body of $\varphi$) as ${f}_2, \ldots, {f}_{2n}$ show $\calK \models \varphi$.
	Any trace combination that satisfies $\psi$  also satisfies the LTL body of $\varphi^{P, \vec{\Xi}}$ (as the conclusion of the implication is $\psi$). 
	We thus get $\calK^{P} \models \varphi^{P, \vec{\Xi}}$ as required.
	
	For the reverse direction, assume $\calK^{P} \models \varphi^{P, \vec{\Xi}}$. For $1 \leq i \leq n$, let $f_{2i} : \traces(\calK^{P})^{2i-1} \to \traces(\calK^{P})$ be a Skolem function that shows this. 
	As before, we construct Skolem functions $\tilde{f}_{2i} : \traces(\calK)^{2i-1} \to \traces(\calK)$ that show $\calK \models \varphi$. 
	The construction of this Skolem function is more involved, as the LTL body of $\varphi^{P, \vec{\Xi}}$ is weaker than $\psi$, i.e., a combination of traces might satisfy the body $\varphi^{P, \vec{\Xi}}$ without satisfying $\psi$ (the LTL body of $\varphi$). 
	The key proof idea is that the premise of $\varphi^{P, \vec{\Xi}}$ only  refers to prophecy variables on \emph{universally} quantified traces.  As the semantics of HyperLTL consider all possible traces for universally quantified traces and $\calK^P$ generates all possible prophecy combinations, there will also exists some trace where all prophecy variables are set correctly (i.e., a prophecy variable $p^\xi $  is set some step iff the $\xi$ holds on the future execution). 
	If we always use a universally quantified trace where prophecies are set correctly, we ensure that the premise of $\varphi^{P, \vec{\Xi}}$ is satisfied, which implies that  $\psi$ is also satisfied (the conclusion of $\varphi^{P, \vec{\Xi}}$).
	In our construction, we thus always pick traces for the universally quantified trace variables that satisfy the premise of $\varphi^{P, \vec{\Xi}}$.
	That is, we pick traces such that the prophecy variables are set correctly, as stated in the premise of $\varphi^{P, \vec{\Xi}}$'s body (cf.~\Cref{def:addProphToForm}).
	To define $\tilde{f}_{2i} : \traces(\calK)^{2i-1} \to \traces(\calK)$, let $t_1, \ldots, t_{2i-1} \in \traces(\calK)$. We extend these traces into traces $t'_1, \ldots, t'_{2i-1} \in \traces(\calK^{P})$ by setting the prophecy variables in $P$. 
	We do so iteratively; for existentially quantified traces, we can consider an arbitrary extension (prophecy variables on existentially quantified traces are never referenced in the LTL body of $\varphi^{P, \vec{\Xi}}$), for universally quantified traces (i.e., those traces where prophecy variables are referenced), we set a prophecy variable iff the corresponding prophecy formula holds in that step.
	Concretely:  We define $t'_1$ by setting a prophecy variable $p^\xi \in P_1$ in step $j$ iff $t_1[j, \infty]$ satisfies $\xi$ (recall that $\xi$ is a formula over $\{\pi_1\}$, cf.~\Cref{def:family}). 
	This way, we ensure $\ltlG \bigwedge_{\xi \in \Xi_{1}} ((p^\xi)_{\pi_{1}} \leftrightarrow \xi)$, i.e., all prophecy variables on $\pi_1$ are set correctly (cf.~\Cref{def:addProphToForm}).
	The prophecies on $t'_2$ (which is existentially quantified) can be set arbitrarily; they are never referenced in $\varphi^{P, \vec{\Xi}}$'s body (cf.~\Cref{def:addProphToForm}).
	In the next step, we define $t'_3$ by setting a prophecy variable $p^\xi \in P_3$ in step $j$ iff $t_1[j, \infty]$, $t_2[j, \infty]$, and $t_3[j, \infty]$, together, satisfies $\xi$ (recall that $\xi$ is a formula over $\{\pi_1, \pi_2, \pi_3\}$, cf.~\Cref{def:family}). 
	We continue this for $t'_5, \ldots, t'_{2i-1}$.
	After having defined $t'_1, \ldots, t'_{2i-1}$, we can compute $t := f_{2i}(t'_1, \ldots, t'_{2i-1}) \in \traces(\calK^{P})$ (i.e., use the original Skolem function showing $\calK^{P} \models \varphi^{P, \vec{\Xi}}$ on our extended traces). We define $\tilde{f}_{2i} (t_1, \ldots, t_{2i-1}) \in \traces(\calK)$ by projecting $t$ on the original APs, i.e., remove all prophecy variables. 
	When using Skolem function $\tilde{f}_2, \ldots, \tilde{f}_{2n}$ to resolve existentially quantified traces in $\varphi$, we always obtain traces that satisfy $\psi$: In the construction, we picked the evaluation of prophecy variables such that premise of $\varphi^{P, \vec{\Xi}}$'s LTL body is satisfied. 
	As $\{f_{2i}\}_{i=1}^n$ witnesses  $\calK^{P} \models \varphi^{P, \vec{\Xi}}$, all combinations that satisfy the premise also satisfy the conclusion of $\varphi^{P, \vec{\Xi}}$'s LTL body, which is exactly $\psi$.
	The Skolem functions $\{\tilde{f}_{2i}\}_{i=1}^n$ thus show $\calK \models \varphi$ as required.
\end{proof}

\fi

\end{document}